\numberwithin{equation}{section}
\newcommand{\Au}{\mathbf{A}}
\newcommand{\Acu}{\boldsymbol{\mathcal{A}}}
\newcommand{\Pbbu}{\underline{\mathbb{P}}{}}
\newcommand{\ufu}{\boldsymbol{\mathfrak{u}}{}}
\newcommand{\vfu}{\boldsymbol{\mathfrak{v}}{}}
\begin{document}

\title[Future stability of the FLRW fluid solutions]{Future stability of the FLRW fluid solutions in the presence of a positive cosmological constant}

\author[T.A. Oliynyk]{Todd A. Oliynyk}
\address{School of Mathematical Sciences\\
Monash University, VIC 3800\\
Australia}
\email{todd.oliynyk@monash.edu}

\begin{abstract}
\noindent  We introduce a new method for establishing the future non-linear stability of perturbations of FLRW solutions to the Einstein-Euler
equations with a positive cosmological constant and a linear equation of state of the form $p = K \rho$. The method is
based on a conformal transformation of the Einstein-Euler equations that compactifies the time domain and can handle
the equation of state parameter values $0<K\leq 1/3$ in a uniform manner. It also determines the
asymptotic behavior of the perturbed solutions in the far future.
\end{abstract}

\maketitle


\sect{intro}{Introduction}
The Einstein-Euler equations for an isentropic perfect fluid with a linear equation of
state and a positive cosmological constant
are given by
\lalin{EE}{
&\Gt^{ij} = \Tt^{ij}-\Lambda \gt^{ij} \qquad (\Lambda > 0),\label{EE.1}\\
&\nablat_i \Tt^{ij} = 0, \label{EE.2}
}
where $\Gt^{ij}$ is the Einstein tensor of the metric
\leqn{gtdef}{
\gt = \gt_{ij} d x^i d x^j,
}
$\Lambda$ is the cosmological constant, and
\eqn{Ttdef}{
\Tt^{ij} = (\rho + p)\vt^i \vt^j + p \gt^{ij} \qquad (\gt_{ij}\vt^i\vt^j = -1)
}
is the perfect fluid stress energy tensors with the pressure determined by the equation of state
\leqn{eosdef}{
p = K \rho \qquad (K\geq 0).
}
On spacetimes of the form $[T_0,\infty)\times \Tbb^3$, the Friedmann-Lema\^{i}tre-Robertson-Walker (FLRW) solutions
to \eqref{EE.1}-\eqref{EE.2} represent a homogenous, fluid filled universe that is undergoing accelerated expansion.
In \cite{RodnianskiSpeck:2013}, the future non-linear stability of these solutions under the condition $0<K<1/3$ and the assumption of zero fluid vorticity was
established by Rodnianski and Speck using techniques developed by Ringstr\"{o}m in \cite{Ringstrom:2008}.
Subsequently, it has been shown \cite{HadzicSpeck:2015,LubbeKroon:2013,Speck:2012}
that this future non-linear stability result remains true for fluids with non-zero vorticity and
also for the equation of state parameter values $K=0$ and $K=1/3$, which correspond to dust and pure radiation, respectively.
It is worth noting that the stability result for $K=1/3$ established in \cite{LubbeKroon:2013} relies on Friedrich's conformal method \cite{Friedrich:1986,Friedrich:1991},
which is completely different from the techniques used in \cite{HadzicSpeck:2015,RodnianskiSpeck:2013,Speck:2012} for the parameter values $0\leq K<1/3$.

In this article, we introduce a new method for establishing the future non-linear stability of FLRW solutions
to \eqref{EE.1}-\eqref{EE.2}. The advantage of our method is threefold: (i) it leads to a relatively compact
stability proof, (ii) the parameter values $0<K\leq 1/3$ can be
handled in a uniform manner, and (iii)  it relies on a conformal transformation that compactifies the time domain to a finite
interval, which we expect will be useful for numerical simulating solutions globally to the future.
A further advantage of our method is that it can be easily adapted to handle
the case $K=0$; we will
present the details in a separate article.

Our approach to solving the Einstein-Euler equations \eqref{EE.1}-\eqref{EE.2} shares much
in common with the methods employed in \cite{Ringstrom:2008,RodnianskiSpeck:2013,Speck:2012} in that it is
based on a particular choice of
wave gauge that brings the Einstein-Euler system into a form that is suitable for
analyzing behaviour in the far future via energy estimates. We remark that, for us,
a suitable form means a symmetric hyperbolic system of the type analyzed in Appendix \ref{Hyp}. The key
difference between our approach and \cite{Ringstrom:2008,RodnianskiSpeck:2013,Speck:2012} is that we do not analyze the physical spacetime
metric directly, but instead use a conformal metric as our primary gravitational variable. It is
worth noting that our method does not rely on the conformal field equations of Friedrich \cite{Friedrich:1986,Friedrich:1991}.

In order to describe our approach, we first must fix our notation. The spacetimes that we consider
are of the form $M=(0,1]\times \Tbb^3$. We use $(x^I)$, $(I=1,2,3)$, to denote (spatial) periodic coordinates
on $\Tbb^3$, and $t=x^0$ to denote a time coordinate on the interval $(0,1]$ where the future lies in the direction
of \emph{decreasing} $t$. Unless stated otherwise, we use lower case Latin letters, i.e. $i,j,k$, to index
spacetime coordinate indices that run from $0$ to $3$ while upper case Latin letters, i.e. $I,J,K$, will
index spatial coordinate indices that run from $1$ to $3$. Partial derivatives with respect to the
coordinates $(x^i)$ will be denoted by
$\del{i}=\del{}/\del{}x^i$. The fluid-four velocity $\vt^i$ is assumed to be future oriented, which is
equivalent to the condition
\leqn{vtfuture}{
\vt^0 < 0.
}

As indicated above, we do not work with the spacetime metric \eqref{gtdef} directly, but instead use
the conformally transformed metric
\leqn{ctrans}{
g_{ij} = e^{-2\Phi}\gt_{ij}.
}
As has been demonstrated previously in \cite{Friedrich:1986,Friedrich:1991}, a key technical advantage of the conformal approach is
that it turns global existence problems into local ones via the compactification of spacetime. In our setting,
this amounts to us being able to obtain solutions globally to the future by solving the conformal Einstein-Euler
equations on a finite time interval. An additional benefit of the conformal approach is that it allows us to account for the leading order
effects of accelerated expansion through a judicious choice of the conformal factor.

Under the conformal transformation \eqref{ctrans},
the Einstein equations \eqref{EE.1} transform as
\eqn{cEineqnsG}{
G^{ij} = T^{ij} := e^{4\Phi}\Tt^{ij}-e^{2\Phi}\Lambda g^{ij} + 2\bigl(\nabla^i\nabla^j\Phi-\nabla^i\Phi\nabla^j\Phi\bigr)
-\bigl(2\Box \Phi + |\nabla\Phi|_g^2\bigr)g^{ij},
}
or equivalently
\leqn{cEineqns}{
-2R_{ij} =
- 4 \nabla_i\nabla_j \Phi + 4 \nabla_i \Phi \nabla_j\Phi - 2\biggl[ \Box \Phi + 2|\nabla\Phi|^2 + \biggl(\frac{1-K}{2}\rho+\Lambda\biggr)e^{2\Phi} \biggr] g_{ij} -2 e^{2\Phi}(1+K)\rho v_i v_j,
}
where we employ the notation $\Box = g^{ij}\nabla_i\nabla_j$ and $|\nabla\Phi|^2 = g^{ij}\nabla_i\Phi\nabla_j \Phi$ and
we have introduced the \emph{conformal fluid four-velocity}
\eqn{vtdef}{
v^i = e^{\Phi}\vt^i.
}
We note that the four-velocity future orientation condition \eqref{vtfuture} is equivalent to
\eqn{vfuture}{
v_0 > 0
}
where $v_0 = g_{0j}v^j$. Unless otherwise specified, we raise and lower all coordinate tensor indices
using the conformal metric $g_{ij}$ and its inverse $g^{ij}$ as appropriate.

Letting $\Gammat{}^k_{ij}$ and $\Gamma^k_{ij}$ denote the Christoffel symbols of the metrics $\gt_{ij}$
and $g_{ij}$, respectively, the difference $\Gammat^k_{ij}-\Gamma^k_{ij}$ is readily calculated to be
\eqn{Gamdiff}{
\Gammat{}^k_{ij}-\Gamma^k_{ij} = g^{kl}\bigl(g_{il}\nabla_j\Phi
+ g_{jl}\nabla_i\Phi - g_{ij} \nabla_l\Phi \bigr).
}
Using this, we can express the Euler equations \eqref{EE.2} as
\leqn{ceul}{
\nabla_i \Tt^{ij} = -6\Tt^{ij}\nabla_i\Phi +g_{lm}\Tt^{lm}
g^{ij}\nabla_i\Phi.
}

The wave gauge we employ is defined by the vanishing of the vector field
\leqn{Zdef}{
Z^k = X^k + Y^k
}
where
\gath{XYdef}{
X^k  := g^{ij}\Gamma_{ij}^k = -\del{i}g^{ki}+\Half g^{kl}g_{ij}\del{l}g^{ij}, \\
Y^k = - 2\nabla^k\Phi + \mu \delta^k_0, 
}
and $\mu$ is to be specified.

For the conformal factor, we choose
\leqn{Phiset}{
\Phi = -\ln(t)
}
while for $\mu$, we choose
\eqn{muset}{
\mu = \frac{2\Lambda}{3t}.
}
With these choices the vector field \eqref{Zdef} becomes
\leqn{Zset}{
Z^k = X^k +  \frac{2}{t}\biggl(g^{k0} + \frac{\Lambda}{3}\delta^{k}_0\biggr).
}
\begin{rem} \label{confrem}
Since we want the conformal factor $e^{2\Phi}$ to become unbounded towards the future to
account for accelerated expansion, the choice \eqref{Phiset} for our conformal factor
explains our choice of time orientation on the interval $(0,1]$ and shows that
future timelike infinity is located at $t=0$.
\end{rem}
For use below, we  introduce the background Minkowski metric
\eqn{Mindef}{
\eta^{ij} = -\frac{\Lambda}{3}\delta^i_0\delta^j_0 + \delta^i_I\delta^j_J \delta^{IJ},
}
the densitized three-metric
\leqn{gfdef}{
\gf^{IJ} = \det(\gch_{KL})^{\frac{1}{3}} g^{IJ}
}
where
\eqn{gchdef}{
(\gch_{IJ} ) = (g^{IJ})^{-1},
}
and the variable
\leqn{qfdef}{
\qf = g^{00}-\eta^{00}-\frac{\Lambda}{9}\ln\bigl(\det(g^{PQ})\bigr).
}

\begin{rem} \label{confvacrem}
While our choices for the coordinates and conformal factor are ultimately justified
by the fact that they work, motivation for these choices can be seen by observing that
\eqn{confvacrem1}{
(g^{ij},\Phi) = (\eta^{ij},-\ln(t))
}
is an exact solution to the vacuum conformal Einstein equations, i.e.
\eqref{cEineqns} with $\rho=0$,
that is a member of the FLRW family of solutions and
locally conformal to de Sitter space.
\end{rem}

With our notation and conventions fixed, we are now able to state our future non-linear stability result in the following
theorem,
which is the main result of this article. The proof is given in Section \ref{mproof}.
\begin{thm} \label{mthm}
Suppose $\Lambda>0$, $\epsilon > 0$, $0<K\leq 1/3$, $k\in \Zbb_{\geq 3}$, $g^{ij}_0 \in H^{k+1}(\Tbb^3)$,
$\; g^{ij}_1,\rho_0,v_I^0 \in H^{k}(\Tbb^3)$, $\rho_0(x)>0$ for all $x\in \Tbb^3$, and
that the quadruple
\leqn{mthm1}{
\bigr(g^{ij},\del{t}g^{ij},\rho,v_I\bigl)\bigl|_{t=1} = \bigr(g^{ij}_0,g^{ij}_1,\rho_0,v^0_I\bigr)
}
satisfies the constraint equations
\leqn{mthm2}{
(G^{i0}-T^{i0})|_{t=1} = 0, \AND Z^k|_{t=1} = 0.
}
Then there exists a $\delta > 0$, such that if
\leqn{mthm3}{
\norm{g^{ij}_0-\eta^{ij}}_{H^{k+1}}+\norm{g^{ij}_1}_{H^k} + \norm{\rho_0}_{H^k} + \norm{v_I}_{H^k} < \delta,
}
then there exists a classical solution $g^{ij}\in C^2((0,1]\times \Tbb^3)$, $\rho,v_I\in C^1((0,1]\times \Tbb^3)$
to the conformal Einstein-Euler equations given by \eqref{cEineqns} and \eqref{ceul} that satisfies the initial conditions \eqref{mthm1},
the gauge condition $Z^k=0$ in $(0,1]\times \Tbb^3$, the regularity conditions
\gath{mthm4}{
g^{ij} \in C^0((0,1],H^{k+1}(\Tbb^3)) \cap C^0([0,1],H^{k}(\Tbb^3))\cap  C^1\bigl((0,1],H^{k}(\Tbb^3)\bigl) \cap C^1\bigl([0,1],H^{k-1}(\Tbb^3)\bigl)
\intertext{and}
\rho, v_I \in  C^0\bigl((0,1],H^{k}(\Tbb^3)\bigr)\cap C^0\bigl([0,1],H^{k-1}(\Tbb^3)\bigr),
}
and the bounds
\gath{mthm5}{
\frac{\Lambda}{6} \leq -g^{00}(t,x) \leq \frac{2\Lambda}{3}, \quad \frac{3}{2\Lambda} \leq -g_{00}(t,x) \leq \frac{6}{\Lambda},
\quad
\frac{1}{2}\delta^{IJ}  \leq g^{IJ}(t,x) \leq  \frac{3}{2}\delta^{IJ}, \\
\intertext{and}
\sqrt{\frac{3}{2\Lambda}} \leq v_0(t,x) \leq \sqrt{\frac{6}{\Lambda}}
}
for all $(t,x)\in (0,1]\times \Tbb^3$.

Moreover, there exist $\sigma, \gamma^j\in H^{k-1}(\Tbb^3)$ with $\sigma(x) > 0$ for all $x\in \Tbb^3$,
such that the solution satisfies
\lalin{mthm6}{
\norm{g^{0j}(t)-(\eta^{0j}+t\gamma^j)}_{H^{k-1}} &\lesssim t^{2-\epsilon}, \label{mthm6.1} \\
\norm{\del{t}g^{0j}(t)-2t^{-1}(g^{0j}(t)-\eta^{0j}) + \gamma^j}_{H^{k-1}} &\lesssim t, \label{mthm6.2} \\
\norm{\del{t}g^{0j}(t)-t^{-1}(g^{0j}(t)-\eta^{0j})}_{H^{k-1}} + \norm{\del{I}g^{0j}(t)}_{H^{k-1}} &\lesssim t^{1-\epsilon},
\label{mthm6.3} \\
\norm{\qf(t)- \qf(0)}_{H^{k}} + \norm{\del{t}\qf(t)}_{H^{k-1}} &\lesssim t, \label{mthm6.4} \\
\norm{\gf^{IJ}(t)-\gf^{IJ}(0)}_{H^{k}} + \norm{\del{t}\gf^{IJ}(t)}_{H^{k-1}} &\lesssim t,\label{mthm6.5} \\
\norm{t^{-3(1+K)}\rho(t)-\sigma}_{H^{k-1}} &\lesssim t+t^{\frac{2(1-3K)}{(1+\epsilon)}}, \label{mthm6.6}
\intertext{and}
\norm{v_I(t)-v_I(0)}_{H^{k-1}} &\lesssim  t^{\frac{1-3K}{(1+\epsilon)}} \label{mthm6.7}
}
for all $t\in [0,1]$, where $v_I(0)=0$ if $K\neq 1/3$.
\end{thm}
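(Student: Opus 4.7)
The strategy is to use the wave gauge $Z^k = 0$ to reduce the conformal Einstein--Euler system \eqref{cEineqns}, \eqref{ceul} to a quasilinear symmetric hyperbolic system with singular-in-$t$ zeroth-order terms of the Fuchsian type analyzed in Appendix \ref{Hyp}, establish global-in-$t$ existence on $(0,1]$ via $t$-uniform energy estimates, and then extract the asymptotics \eqref{mthm6.1}--\eqref{mthm6.7} from the resulting bounds. With $Z^k = 0$, the reduced Einstein equations have principal part $g^{kl}\del{k}\del{l}g^{ij}$ and can be written as a first-order system in $(u^{ij}, \del{t}u^{ij}, \del{I}u^{ij})$ for $u^{ij} := g^{ij} - \eta^{ij}$, coupled through \eqref{ceul} to first-order equations for the fluid variables $(\rho, v_I)$. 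The choices \eqref{Zset}, \eqref{Phiset} of gauge and conformal factor, together with the use of the densitized variable $\gf^{IJ}$ and the composite variable $\qf$ from \eqref{gfdef}--\eqref{qfdef}, are designed precisely so that the coefficient matrix $\mathcal{B}$ of the singular $\tfrac{1}{t}$-term in the resulting system $B^0\del{t}U + B^I\del{I}U = \tfrac{1}{t}\mathcal{B} U + F(U)$ meets the positivity hypotheses needed for the appendix's energy estimates.

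The proof then proceeds in four steps. First, derive the reduced system and verify propagation of the wave gauge and Einstein constraints: $Z^k$ satisfies a homogeneous linear wave equation whose initial data vanishes by \eqref{mthm2}, so $Z^k \equiv 0$ on the domain of existence, and the Hamiltonian and momentum constraints $G^{i0} - T^{i0} = 0$ propagate by the contracted Bianchi identity, so that any solution of the reduced system is in fact a solution of \eqref{cEineqns}, \eqref{ceul}. Second, invoke standard local well-posedness for symmetric hyperbolic systems to obtain a solution on some maximal interval $(t_*, 1]$ together with a continuation criterion. Third, apply the Fuchsian energy estimates of Appendix \ref{Hyp} to obtain, for $\delta$ sufficiently small, a uniform bound $\norm{U(t)}_{H^k} \lesssim \delta$ on $(t_*, 1]$ that preserves the pointwise inequalities \eqref{mthm5}; combined with the continuation criterion this forces $t_* = 0$ and yields the solution on all of $(0,1]\times\Tbb^3$.

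Fourth, with uniform bounds in hand, the asymptotic estimates \eqref{mthm6.4}--\eqref{mthm6.5} for $\qf$ and $\gf^{IJ}$ follow because their time derivatives are $O(1)$ in $H^{k-1}$, so they extend continuously to $t=0$ with differences of order $t$; the estimates \eqref{mthm6.1}--\eqref{mthm6.3} for $g^{0j}$ come from integrating the evolution equation, with the logarithmic loss arising from a subleading $1/t$ source and the linear drift $t\gamma^j$ from the mode in the nullspace of the principal singular matrix; the fluid estimates \eqref{mthm6.6}--\eqref{mthm6.7} reflect the spectral structure of $\mathcal{B}$ restricted to the fluid sector, whose relevant eigenvalue $1-3K$ degenerates at $K=1/3$ and thereby forces the small loss $\epsilon$ in the exponent. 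The main obstacle is the third step: constructing variables and a symmetrizer in which the singular coefficient matrix of the coupled metric-fluid system simultaneously satisfies the positivity hypotheses of Appendix \ref{Hyp} uniformly for all $K \in (0, 1/3]$, with the borderline case $K=1/3$ being the most delicate because of the degeneracy in the fluid sector; the rest of the argument is largely mechanical once this positivity and the companion pointwise bounds \eqref{mthm5} are secured.
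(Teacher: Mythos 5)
Your overall strategy matches the paper's at a high level: conformal compactification, the wave gauge $Z^k=0$, reduction to a symmetric hyperbolic system with a singular $\frac{1}{t}$ term, global existence on $(0,1]$ via the energy estimates of Theorem \ref{symthm} in Appendix \ref{Hyp}, extraction of the asymptotics from the second part of that theorem, and propagation of the gauge and the Einstein constraints (the paper appeals to the Proposition in \cite{Friedrich:1985}; your Bianchi-identity sketch is the standard route). Your heuristic for the origins of the $-t^{2}\ln t$ and $t\gamma^{j}$ terms, and for the role of the eigenvalue $1-3K$ degenerating at $K=1/3$, is also in line with what the paper actually shows via \eqref{mproof1a} and \eqref{mproof3a}.

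The gap is in the concrete first-order reduction. You propose taking the unknowns to be $(u^{ij},\del{t}u^{ij},\del{I}u^{ij})$ with $u^{ij}=g^{ij}-\eta^{ij}$, coupled to the raw fluid variables $(\rho,v_{I})$. This choice would not satisfy the structural hypotheses of Theorem \ref{symthm}. The component $g^{0j}-\eta^{0j}$ decays like $t$, and $\rho$ decays like $t^{3(1+K)}$, so these variables cannot be the ones that have nontrivial limits at $t=0$; more to the point, the required algebraic structure --- the ordering $B^0 \leq \kappa^{-1}\Bc$ in \eqref{A0bnd}, the block condition \eqref{Bocross}, and the smallness bounds \eqref{Bodot.1}--\eqref{Bodot.4} on $D_{u}B^0$ --- cannot be arranged for the full metric--fluid system in those unknowns. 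The paper's decisive technical step is the rescaled parameterization \eqref{ufdef.1}--\eqref{ufdef.7}, in particular $\uf^{0j}=(g^{0j}-\eta^{0j})/(2t)$, $\uf^{0j}_{0}=\del{t}g^{0j}-3(g^{0j}-\eta^{0j})/(2t)$, $\uf^{0j}_{I}=\del{I}g^{0j}$ for the lapse/shift sector, together with the logarithmic density $\zeta$ defined through $\rho=t^{3(1+K)}\rho_{c}e^{(1+K)\zeta}$ from \eqref{ceulB.1} for the fluid; it is these choices (not just $\gf^{IJ}$ and $\qf$, which you do mention) that make the verifications \eqref{mproof1}, \eqref{Acom}, \eqref{DAuvar}, \eqref{Bccom}, \eqref{mproof1a}, \eqref{mproof3a}, and \eqref{B0diff} go through uniformly in $K\in(0,1/3]$. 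You correctly flag "constructing variables and a symmetrizer" as the main obstacle, but the variables you actually write down do not surmount it, so as written the proposal omits the central idea of the proof.
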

We conclude with a few remarks:
\begin{enumerate}[(i)]
\item For any specified $\delta >0$, an open set of initial data satisfying the constraint equations \eqref{mthm2} and
 the condition \eqref{mthm3} can be constructed
using a variation of the method employed in \cite[\S 3]{Oliynyk:CMP_2010}.
\item The FLRW family of solutions correspond to the solutions obtained from initial data \eqref{mthm1} of
the form
\eqn{FLRWidata}{
\bigr(g^{ij}_0,g^{ij}_1,\rho_0,v^0_I\bigr)=
\bigl(-a_0\delta^{i}_0\delta^{j}_0+b_0\delta^i_I\delta^j_J\delta^{IJ},
-a_1\delta^{i}_0\delta^{j}_0+b_1\delta^i_I\delta^j_J\delta^{IJ},c,0\bigr)
}
where the constants $a_0,b_0,c\in \Rbb_{>0}$ and $a_1,b_1\in \Rbb$ are chosen so that the constraint equations \eqref{mthm2} are
satisfied.
\item More detailed behaviour of the solution near $t=0$ can be obtained by using the estimates
\eqref{mthm6.1}-\eqref{mthm6.7} from Theorem \ref{mthm}
together with another application of Theorem \ref{symthm} from Appendix \ref{Hyp}.
\item As discussed in Section 1.3 of \cite{RodnianskiSpeck:2013}, it should be possible using Ringstr\"{o}m's patching argument from \cite{Ringstrom:2008} to generalize the
stability result contained in Theorem \ref{mthm} to other spatial topologies.
\end{enumerate}



\sect{cEin}{The reduced conformal Einstein-Euler equations}

As discussed briefly in the introduction, our proof of Theorem \ref{mthm}
relies on expressing the conformal Einstein-Euler equations, given
by \eqref{cEineqns} and \eqref{ceul}, as a symmetric hyperbolic system of the type analyzed in Appendix \ref{Hyp}.
We achieve this through the use of a wave gauge defined by the vanishing
of the vector field \eqref{Zdef} in conjunction with a suitable transformation of the field variables.
We present the details of this procedure in Sections \ref{redconf} and \ref{ceuler}.

\subsect{redconf}{The reduced conformal Einstein equations}

Following H. Friedrich's \cite{Friedrich:1985} generalization of the gauge reduction method of Y.Choquet-Bruhat \cite[Ch. VI, \S 8]{ChoquetBruhat:2009} that, crucially, allows for gauge source functions,
we replace the Einstein equations \eqref{cEineqns} by
the gauge reduced version:
\alin{cEineqnsA}{
-2R_{ij}+ 2\nabla_{(i}Z_{j)}+&A_{ij}{}^k Z_k = - 4 \nabla_i\nabla_j \Phi + 4 \nabla_i \Phi \nabla_j\Phi \notag\\
&- 2\biggl[ \Box \Phi + 2|\nabla\Phi|^2 +\biggl(\frac{1-K}{2}\rho+\Lambda\biggr)e^{2\Phi} \biggr] g_{ij} -2(1+K)e^{2\Phi} \rho v_i v_j 
}
where
\eqn{Adef}{
A^{ij}{}_k = -X^{(i}\delta^{j)}_k+ Y^{(i}\delta^{j)}_k.
}
We will refer to these equations as the \emph{reduced conformal Einstein equations}.

Since the gauge condition $Z^k=0$ propagates, we can, with out loss of generality, assume that
it holds, or equivalently that
\eqn{ZzeroA}{
X^k = -Y^k,
}
which, in turn, implies that
\leqn{ZzeroB}{
\nabla_i \nabla^i t = Y^0 = \frac{2}{t}\biggl(g^{00}+\frac{\Lambda}{3}\biggr).
}

A straightforward calculation using \eqref{Phiset}, \eqref{Zset}, \eqref{ZzeroB} and the identity
\eqn{symDdeltaj0}{
\nabla^{(i}\chi^{j)} = -\frac{1}{2}\del{t}g^{ij} \qquad (\chi^j=\delta^j_0)
}
then shows that reduced Einstein equations are equivalent to
\lalin{cEineqnsB}{
-2R^{ij} + 2\nabla^{(i}X^{j)} -X^i X^j = \frac{2\Lambda}{3t} \del{t}g^{ij} -\frac{4\Lambda}{3t^2}&\biggl(g^{00}+\frac{\Lambda}{3}\biggr)\delta^{i}_0\delta^{j}_0
-\frac{4\Lambda}{3t^2}g^{0K}\delta^{(i}_{0}\delta^{j)}_{K} \notag \\
&  -\frac{2}{t^2}g^{ij}\biggl(g^{00}+\frac{\Lambda}{3}\biggr) - (1-K)\frac{\rho}{t^2}g^{ij} - 2(1+K)\frac{\rho}{t^2} v^i v^j, \label{cEineqnsB.1}
}
which we note can be expressed as
\lalin{cEineqnsC}{
-g^{kl}\del{k}\del{l}g^{ij}  - Q^{ij}(g,\del{}g) = \frac{2\Lambda}{3t} \del{t}g^{ij} -\frac{4\Lambda}{3t^2}&\biggl(g^{00}+\frac{\Lambda}{3}\biggr)\delta^{i}_0\delta^{j}_0
-\frac{4\Lambda}{3t^2}g^{0K}\delta^{(i}_{0}\delta^{j)}_{K} \notag \\
&  -\frac{2}{t^2}g^{ij}\biggl(g^{00}+\frac{\Lambda}{3}\biggr) - (1-K)\frac{\rho}{t^2}g^{ij} - 2(1+K)\frac{\rho}{t^2} v^i v^j \label{cEineqnsC.1}
}
where the $Q^{ij}$ are quadratic in $\del{}g=(\del{k}g^{ij})$ and analytic in $g=(g^{ij})$ on the region defined by $\det(g^{ij})<0$.

The first step in transforming \eqref{cEineqnsC.1} into the desired form involves
parameterizing the three-metric $g^{IJ}$ using densitized spatial metric \eqref{gfdef} and the
determinant variable \eqref{qfdef}.
Using the identities
\eqn{LidA}{
-g^{kl}\del{k}\del{l}\bigl(\gf^{IJ}\bigr) = \det(\gch_{PQ})^{\frac{1}{3}}\Lc^{IJ}_{LM}\bigl(-g^{lk}\del{l}\del{k} g^{LM}\bigr)
-g^{kl}\del{k}\bigl(\det(\gch_{PQ})^{\frac{1}{3}} \Lc^{IJ}_{LM}\bigr)   \del{l}g^{LM}
}
and
\eqn{LidB}{
\del{t}\gf^{IJ} = \det(\gch_{PQ})^{\frac{1}{3}} \Lc^{IJ}_{LM}\del{t} g^{LM},
}
where
\eqn{Ldef}{
\Lc^{IJ}_{LM} = \delta^I_L\delta^J_M - \Third \gch_{LM} g^{IJ},
}
we see from \eqref{cEineqnsC.1} that $\gf^{IJ}$ and $\qf$ satisfy
\leqn{cEineqnsD}{
-g^{kl}\del{k}\del{l}\gf^{IJ}  - \Qt^{IJ}(g,\del{}g) = \frac{2\Lambda}{3t} \del{t}\gf^{IJ}
 - 2(1+K)\det(\gch_{PQ})^{\frac{1}{3}} \Lc^{IJ}_{LM}\frac{\rho}{t^2} v^L v^M
}
and
\leqn{cEineqnsE}{
-g^{kl}\del{k}\del{l}\qf  - \Qt(g,\del{}g) = \frac{2\Lambda}{3t} \del{t}\qf
- 2\biggl(\frac{g^{00}+\frac{\Lambda}{3}}{t}\biggr)^2+(1-K)\biggl(\frac{\Lambda}{3}-g^{00}\biggr)\frac{\rho}{t^2} + 2(1+K)\biggl(\frac{2\Lambda}{9} \gch_{IJ}v^I v^J - v^{0}v^{0}\biggr)\frac{\rho}{t^2},
}
respectively,
where $\Qt^{IJ}$ and $\Qt$ are quadratic in $\del{}g=(\del{k}g^{ij})$ and analytic in $g=(g^{ij})$ on the region defined by $\det(g^{IJ})>0$ and
$g_{00} < 0$.

We proceed by writing the wave equations \eqref{cEineqnsC.1} $(i=0)$, \eqref{cEineqnsD} and \eqref{cEineqnsE} in first order form using the following definitions:
\lalin{ufdef}{
\uf^{0j} &= \frac{g^{0j}-\eta^{0j}}{2t}, \label{ufdef.1} \\
\uf^{0j}_0 &= \del{t}g^{0j}- \frac{3(g^{0j}-\eta^{0j})}{2t}, \label{ufdef.2} \\
\uf_I^{0j} &= \del{I}g^{0j}, \label{ufdef.3} \\
\uf^{IJ} &= \gf^{IJ} - \delta^{IJ}, \label{ufdef.4} \\
\uf^{IJ}_k &= \del{k}\gf^{IJ}, \label{ufdef.5}\\
\uf &= \qf, \label{ufdef.6} \\
\intertext{and}
\uf_k &= \del{k}\qf. \label{ufdef.7}
}
In terms of theses first order variables, it is not difficult to verify that the wave equations \eqref{cEineqnsC.1} $(i=0)$, \eqref{cEineqnsD}, and \eqref{cEineqnsE} are equivalent
to the following systems of symmetric hyperbolic equations:
\lalin{cEineqnsF}{
A^k\del{k}\begin{pmatrix}\uf^{0l}_0  \\ \uf^{0l}_J \\ \uf^{0l} \end{pmatrix} &= \frac{1}{t}\Ac \Pbb \begin{pmatrix}\uf^{0l}_0  \\ \uf^{0l}_J \\ \uf^{0l}
\end{pmatrix}  + F, \label{cEineqnsF.1}\\
A^k\del{k}\begin{pmatrix}\uf^{LM}_0  \\ \uf^{LM}_J \\ \uf^{LM} \end{pmatrix} &= -\frac{2g^{00}}{t}
\Pi\begin{pmatrix}\uf^{LM}_0  \\ \uf^{LM}_J  \\ \uf^{LM}
\end{pmatrix} + \Ff, \label{cEineqnsF.2}
\intertext{and}
A^k\del{k}\begin{pmatrix}\uf_0  \\ \uf_J \\ \uf \end{pmatrix} &= -\frac{2g^{00}}{t}
\Pi\begin{pmatrix}\uf_0  \\ \uf_J \\ \uf
\end{pmatrix} + \ff \label{cEineqnsF.3}
}
where
\lalin{Akdef}{
A^0 &= \begin{pmatrix} -g^{00} & 0 & 0 \\ 0& g^{IJ} & 0 \\ 0 & 0 & -g^{00} \end{pmatrix}, \label{Akdef.1} \\
A^K &= \begin{pmatrix}  - 2g^{0K} & -g^{JK} & 0 \\ -g^{IK} & 0 & 0  \\ 0 & 0 & 0 \end{pmatrix},
\label{Akdef.2} \\
\Pbb & = \begin{pmatrix} \frac{1}{2} & 0 & \frac{1}{2} \\ 0 & \delta^J_K & 0 \\
\frac{1}{2} & 0 & \frac{1}{2}  \end{pmatrix}, \label{Akdef.3}\\
\Ac &= \begin{pmatrix} -g^{00} & 0 & 0 \\ 0 & \frac{3}{2} g^{IK} & 0 \\
0 & 0 & -g^{00}  \end{pmatrix}, \label{Akdef.4}\\
\Pi & = \begin{pmatrix} 1 & 0 & 0 \\ 0 & 0 & 0 \\
0 & 0 & 0  \end{pmatrix}, \label{Akdef.5}\\
F &= \begin{pmatrix}Q^{0l} + 4\uf^{00}(\uf^{0l}_0+\uf^{0l})-8\uf^{00}\uf^{0l} + 6 \uf^{0I}\uf^{0l}_I -(1-K)\frac{\rho}{t^2}g^{0l}-2(1+K)\frac{\rho}{t^2} v^0 v^l
\\ 0 \\ 0\end{pmatrix}, \label{Akdef.6}\\
\Ff &= \begin{pmatrix} \Qt^{LM}+4\uf^{00}\uf^{LM}_0 - 2(1+K)\det(\gch_{PQ})^{\frac{1}{3}}\Lc^{LM}_{RS}\frac{\rho}{t^2} v^R v^S \\ 0 \\ -g^{00} \uf^{LM}_0  \end{pmatrix},  \label{Akdef.7}
\intertext{and}
\ff &= \begin{pmatrix} \Qt+4\uf^{00}\uf_0 -8(\uf^{00})^2 + (1-K)\Bigl(\frac{\Lambda}{3}-g^{00}\Bigr)\frac{\rho}{t^2} +
2(1+K)\Bigl(\frac{2\Lambda}{9}\gch_{RS}v^R v^S - v^0 v^0\Bigr)\frac{\rho}{t^2} \\ 0 \\ -g^{00} \uf^{LM}_0 \end{pmatrix}.
\label{Akdef.8}
}

It is important to note that the term $\frac{\rho}{t^2}$ that appears
in \eqref{Akdef.6}-\eqref{Akdef.8} is not a singular term. This is because we can bound the density variable $\zeta$ defined by \eqref{ceulB.1} in the following section. Expressing $\frac{\rho}{t^2}$ in terms of $\zeta$, it is clear that this
term is regular.

Another important point is that the first order variables \eqref{ufdef.1}-\eqref{ufdef.7} are completely equivalent for $t>0$ to the metric $g^{ij}$ and its first derivatives $\del{k}g^{ij}$.
Consequently, we lose nothing by parameterizing the gravitational field in this fashion, and gain by having transformed the reduced conformal equations into a symmetric hyperbolic system of the form\footnote{To be precise, we still need to make the trivial coordinate transformation $t \mapsto -t$ to obtain the form considered in Appendix \ref{Hyp}.} considered in Appendix \ref{Hyp}.

\subsect{ceuler}{The conformal Euler equations}
For the conformal Euler equations \eqref{ceul}, we use the formulation from \cite[\S 2.2]{Oliynyk:CMP_2015}; see, in particular, equation (2.55) of that article. In this formulation, given the
conformal factor \eqref{Phiset} and the linear equation of state \eqref{eosdef},
the conformal Euler equations \eqref{ceul} are from the computations presented in \cite[\S 2.2]{Oliynyk:CMP_2015} readily seen
to be given by
\leqn{ceulA}{
B^k \del{k}\begin{pmatrix} \zeta \\ v_J \end{pmatrix}  = \frac{1}{t}\Bc \pi\begin{pmatrix} \zeta \\ v_J \end{pmatrix}  -  H
}
where
\lalin{ceulB}{
\rho &= t^{3(1+K)} \rho_c e^{(1+K)\zeta}, \qquad \rho_c \in \Rbb_{> 0}, \label{ceulB.1}\\
v_0 & = v_0(g^{kl},v_L) := \frac{-g^{0I}v_I - \sqrt{(g^{0I}v_I)^2-g^{00}(g^{IJ}v_I v_J +1 )} }{g^{00}} ,  \label{ceulB.2} \\
v^i & =  v^i(g^{kl},v_L) := g^{iJ}v_J + g^{i0}v_0(g^{kl},v_L) , \label{ceulB.3} \\
V^{IJ} &:= \frac{\del{}v^I}{\del{}v_J} = g^{IJ} - g^{I0}\frac{v^J}{v^0}, \label{ceulB.4}\\
\Bc &= \frac{-1}{v^0}\begin{pmatrix} 1 & 0 \\ 0 & \frac{1-3K}{v_0}V^{JI} \end{pmatrix}, \label{ceulB.5} \\
\pi &= \begin{pmatrix} 0 & 0 \\ 0 & \delta_{I}^J \end{pmatrix},\label{ceulB.6} \\
L^k_I &= \delta^k_J - \frac{v_J}{v_0} \delta^k_0, \label{ceulB.7} \\
M_{IJ} &= g_{IJ} - \frac{v_I}{v_0}g_{0J} - \frac{v_{J}}{v_0}
g_{I0} + \frac{g_{00}}{(v_0)^2}v_I v_J, \label{ceulB.8}\\
B^0 &= \begin{pmatrix} K  &  \frac{K}{v^0}  L^0_M V^{MJ} \\ \frac{K}{v^0} V^{LI} L^0_L & V^{LI} M_{LM} V^{MJ} \end{pmatrix}, \label{ceulB.9}\\
B^K &= \frac{1}{v^0}\begin{pmatrix} Kv^K  &  K  L^K_M V^{MJ} \\ K V^{LI} L^K_L & V^{LI} M_{LM} V^{MJ} v^K \end{pmatrix}, \label{ceulB.9a}
\intertext{and}
H & = -\frac{1}{v^0}\begin{pmatrix}
-K L^k_L \Gamma^L_{kl}v^l - KL^k_L \frac{\del{} v^L}{\del{}g^{lm}}\del{k} g^{lm} \\
- v^{LI}M_{LM}v^k \Gamma_{kl}^M v^l - V^{LI}M_{LM}v^k  \frac{\del{} v^L}{\del{}g^{lm}}\del{k} g^{lm}
\end{pmatrix}. \label{ceulB.10}
}

\sect{mproof}{Proof of Theorem \ref{mthm}}
We begin by fixing $\Lambda>0$, $k\in \Zbb_{\geq 3}$, $K\in (0,1/3]$, $\epsilon > 0$ and $R>0$. Assuming $g_{00}<0$, $g^{00}<0$ and $\det(g^{IJ}) >0$, we then observe that the relations
\leqn{mproof1}{
\Pbb A^0 \Pbb \leq \Pbb \Ac \Pbb \AND  \Pi A^0 \Pi = \frac{1}{2}\bigl(-2g^{00}\Pi\bigr)
}
follow directly from \eqref{Akdef.1}, \eqref{Akdef.3} and \eqref{Akdef.5}.

Next, setting
\eqn{Audef}{\Au^0 = \begin{pmatrix} A^0 & 0 & 0 \\ 0  & A^0 & 0 \\
0 & 0 & A^0  \end{pmatrix},
}
\eqn{Acudef}{\Acu = \begin{pmatrix} \Ac & 0 & 0 \\ 0  & -2g^{00}\id & 0 \\
0 & 0 & -2 g^{00}\id  \end{pmatrix},
}
\eqn{Pbbudef}{
\Pbbu = \begin{pmatrix} \Pbb & 0 & 0 \\ 0  & \Pi & 0 \\
0 & 0 & \Pi  \end{pmatrix},
}
and
\eqn{ufudef}{
\ufu = \bigl(\uf^{0l},\uf^{0l}_J,\uf^{0l},\uf^{LM}_0,\uf^{LM}_J,\uf^{LM},\uf_0,\uf_J,\uf\bigr)^{\tr},
}
a straightforward calculation shows that
\leqn{Acom}{
[\Au^0,\Pbbu]=[\Acu,\Pbbu] = 0,
}
while we see from \eqref{Akdef.1} that
\leqn{Auvar}{
\Au^0 = \Au^0(t,t\ufu,\Pbbu^{\perp}\ufu).
}
Using \eqref{Acom} and \eqref{Auvar}, we compute
\leqn{DAuvar}{
D_{\ufu}\Au^0\cdot (\Au^0)^{-1}\Acu\Pbbu \ufu = t \bigl[ D_2 \Au^0(t,t\ufu,\Pbbu^{\perp}\ufu)\cdot (\Au^0)^{-1}\Acu\Pbbu \ufu\bigr] \qquad 
(\Pbbu^{\perp} := \id - \Pbbu).
}

Next, it is clear from \eqref{ceulB.5} and \eqref{ceulB.6} that
\leqn{Bccom}{
[\Bc,\pi] = 0,
}
and that
\leqn{mproof1a}{
\pi\Bc = 0, \qquad \text{if $K=1/3$.}
}
Setting
\eqn{vfudef}{
\xi = (\ufu,\zeta,v_J), 
}
and evaluating $B^0$ and $\Bc$ at $\xi=0$, we find that
\eqn{Bczero}{
B^0|_{\xi=0} = \begin{pmatrix} K & 0 \\ 0 & \delta_{IJ} \end{pmatrix}
\AND
\Bc|_{\xi=0} = \begin{pmatrix} \sqrt{\frac{3}{\Lambda}} & 0 \\ 0 & 1-3K \end{pmatrix}.
}
From these expressions, it is clear that
\leqn{mproof3a}{
\pi B^0|_{\xi=0} \pi  = \frac{1}{1-3K}\pi \Bc|_{\xi=0}=(\delta^{IJ}), \qquad \text{if $K\in (0,1/3)$}.
}
We also observe from \eqref{ceulB.9} that
\eqn{B0vars}{
\pi^\perp B^0(t,\xi)\pi^\perp = K \qquad (\pi^\perp := \id - \pi),
}
which, in turn, implies after differentiating that
\leqn{B0diff}{
\pi^\perp \bigl[D_{\xi} B^0(t,\xi)\cdot \delta\xi\bigr] \pi^\perp  = 0
}
for all $(t,\xi)$ in the domain of definition of $B^0$ and all variations $\delta\xi$.

From the evolution equations \eqref{cEineqnsF.1}, \eqref{cEineqnsF.2}, \eqref{cEineqnsF.3},
and \eqref{ceulA}, the relations \eqref{mproof1}, \eqref{Acom}, \eqref{DAuvar}, \eqref{Bccom}, \eqref{mproof1a}, \eqref{mproof3a}, and \eqref{B0diff}, and
Remark \ref{boundrem}, it is clear after performing
the trivial time-coordinate transformation $t \mapsto -t$ that $\xi$ satisfies a symmetric hyperbolic
system of the type to which we can apply Theorem \ref{symthm} from Appendix \ref{Hyp}. Doing so guarantees
the existence of a $\delta >0$ such that if
\leqn{mproof4}{
(g^{ij},\del{t}g^{ij},\zeta,v_J)\bigl|_{t=1} =  (g^{ij}_0,\del{t}g^{ij}_0,\zeta_0,v_J^0)\bigl|_{t=1}\in H^{k+1}(\Tbb^3)\times H^{k}(\Tbb^3) \times H^{k}(\Tbb^3) \times H^{k}(\Tbb^3)
}
and
\eqn{mproof4a}{
\rho_c \in \Rbb_{>0}
}
is chosen so that $\rho_c$ and the corresponding initial data $\xi|_{t=1}=\xi_0$, which can be computed via the formulas \eqref{gfdef}, \eqref{qfdef}, \eqref{ufdef.1}-\eqref{ufdef.7} and \eqref{ceulB.1}, are bounded by
\eqn{mproof5}{
\norm{\xi_0}_{H^k} + \rho_c \leq \delta,
}
then there exists a map
\eqn{mproof6}{
\xi \in C^1\bigl((0,1]\times \Tbb^3]\bigr)\cap C^0\bigl((0,1],H^{k}(\Tbb^3)\bigr) \cap C^0\bigl([0,1],H^{k-1}(\Tbb^3)\bigr)
}
such that
\eqn{mproof7}{
\norm{\xi}_{L^\infty([0,1],H^{k})} \leq R,
}
$\xi$ is the unique classical solution to \eqref{cEineqnsF.1}, \eqref{cEineqnsF.2}, \eqref{cEineqnsF.3}
and \eqref{ceulA} on $(0,1]\times \Tbb^3$
that agrees with the initial data at $t=1$, i.e. $\xi|_{t=1}=\xi_0$,
the metric $g^{ij}$ and conformal four-velocity component $v_0$, as determined by the formulas \eqref{gfdef}, \eqref{qfdef}, \eqref{ufdef.1}, \eqref{ufdef.4}, \eqref{ufdef.6} and
\eqref{ceulB.2}, satisfy
\gath{mproof8a}{
\frac{\Lambda}{6} \leq -g^{00}(t,x) \leq \frac{2\Lambda}{3}, \quad \frac{3}{2\Lambda} \leq -g_{00}(t,x) \leq \frac{6}{\Lambda},
\quad
\frac{1}{2}\delta^{IJ}  \leq g^{IJ}(t,x) \leq  \frac{3}{2}\delta^{IJ}, \\
\intertext{and}
\sqrt{\frac{3}{2\Lambda}} \leq v_0(t,x) \leq \sqrt{\frac{6}{\Lambda}}
}
for all $(t,x)\in (0,1]\times \Tbb^3$. Moreover, for given $\epsilon >0$,  it follows, by choosing $R>0$ small enough, that the components of $\xi$ satisfy:
\lalin{mproof8}{
\norm{\uf^{0j}_0(t)-\uf^{0j}(t)+2\uf^{0j}(0)}_{H^{k-1}} &\lesssim t, \label{mproof8.1}\\
\norm{\uf^{0j}_0(t)+\uf^{0j}(t)}_{H^{k-1}} + \norm{\uf_I^{0j}}_{H^{k-1}} &\lesssim t^{1-\epsilon}, \label{mproof8.2}\\
\norm{\uf^{IJ}(t)-\uf^{IJ}(0)}_{H^{k-1}} + \norm{\uf^{IJ}_K(t)-\uf^{IJ}_K(0)}_{H^{k-1}} + \norm{\uf^{IJ}_0(t)}_{H^{k-1}} &\lesssim t, \label{mproof8.3}\\
\norm{\uf(t)-\uf(0)}_{H^{k-1}}+ \norm{\uf_I(t)-\uf_I(0)}_{H^{k-1}}  + \norm{\del{t}\uf(t)}_{H^{k-1}} &\lesssim t, \label{mproof8.4}\\
\norm{\zeta(t)-\zeta(0)}_{H^{k-1}} &\lesssim t+ t^{\frac{2(1-3K)}{(1+\epsilon)}}, \label{mproof8.5}
\intertext{and}
\norm{v_I(t)-v_I(0)}_{H^{k-1}} &\lesssim  t^{\frac{1-3K}{(1+\epsilon)}} \label{mproof8.6}
}
for $0\leq t \leq 1$, where
\eqn{mproof9}{
v_I(0)=0 \quad \text{if $K\in (0,1/3)$.}
}
Using \eqref{mproof8.1}-\eqref{mproof8.6}, it follows directly from the formulas \eqref{gfdef}, \eqref{qfdef}, \eqref{ufdef.1}-\eqref{ufdef.7}, and \eqref{ceulB.1} that the metric and fluid density satisfy:
\alin{mproof11}{
\norm{g^{0j}(t)-(\eta^{0j}+t\gamma^j)}_{H^{k-1}} &\lesssim t^{2-\epsilon}, \\
\norm{\del{t}g^{0j}(t)-2t^{-1}(g^{0j}(t)-\eta^{0j}) + \gamma^j}_{H^{k-1}} &\lesssim t,\\
\norm{\del{t}g^{0j}(t)-t^{-1}(g^{0j}(t)-\eta^{0j})}_{H^{k-1}} + \norm{\del{I}g^{0j}}_{H^{k-1}} &\lesssim t^{1-\epsilon},\\
\norm{\qf(t)- \qf(0)}_{H^{k}} + \norm{\del{t}\qf(t)}_{H^{k-1}} &\lesssim t,\\
\norm{\gf^{IJ}(t)-\gf^{IJ}(0)}_{H^{k}} + \norm{\del{t}\gf^{IJ}(t)}_{H^{k-1}} &\lesssim t,
\intertext{and}
\norm{t^{-3(1+K)}\rho(t)-\sigma}_{H^{k-1}} &\lesssim t+t^{\frac{2(1-3K)}{(1+\epsilon)}},
}
where $\sigma, \gamma^j\in H^{k-1}(\Tbb^3)$, and $\sigma(x)>0$ for all $x\in \Tbb^3$.

Since $\xi$ satisfies \eqref{cEineqnsF.1}, \eqref{cEineqnsF.2}, \eqref{cEineqnsF.3}
and \eqref{ceulA} on $(0,1]\times \Tbb^3$, it is clear from the results of Section \ref{redconf} that the triple $\bigl(g^{ij},\rho=t^{3(1+K)}\rho_c e^{(1+K)\zeta},v_I\bigr)$ determined by $\xi$
solves the reduced conformal Einstein-Euler equations given by \eqref{ceul} and \eqref{cEineqnsB.1} on  $(0,1]\times \Tbb^3$. This fact together with the assumption
that the initial data \eqref{mproof4} satisfies the constraint equations \eqref{mthm2} allows us to conclude
via the Proposition on pg. 540 of \cite{Friedrich:1985} that the triple $\bigl(g^{ij},\rho=t^{3(1+K)}\rho_c e^{(1+K)\zeta},v_I\bigr)$  also
satisfies the full conformal Einstein-Euler equations, given by \eqref{cEineqns} and \eqref{ceul}, and the gauge condition $Z^k=0$ on  $(0,1]\times \Tbb^3$.
This completes the proof of Theorem \ref{mthm}.


\bigskip

\noindent \textit{Acknowledgements:} This work was partially supported by the Australian Research Council grant FT1210045. I would like to thank C. Liu for useful discussions, and, in particular, for identifying
an error in the statement and proof of a previous version of Theorem \ref{symthm}. I also thank the referees for their comments
and criticisms, which have served to improve the content
and exposition of this article.

\appendix


\sect{calc}{Calculus inequalities}
In this appendix, we collect, for the convenience of the reader, a number of calculus inequalities that we employ throughout this article. The proof of the following inequalities are well known and may be found, for example, in
the books \cite{AdamsFournier:2003}, \cite{Friedman:1976} and \cite{TaylorIII:1996}. Before stating theses calculus inequalities, we first fix our notation and introduced a few definitions that will be needed in this appendix and the
following one:

\vspace{0.15cm}

\noindent \textbf{Coordinates, indexing and derivatives:} Lower case Latin indices, e.g. $i,j,k$, will run from $0$ to $n$, while upper case
Latin indices, e.g. $I,J,K$, will run from $1$ to $n$. We use $x=(x^I)$ to denote the standard periodic coordinates
on the $n$-Torus $\Tbb^n$ and $x^0=t$ to denote the time coordinate on intervals of
the form $[T_0,T_1)$ where $T_0 < T_1 \leq 0$. Furthermore, we use lower case Greek letters to denote multi-indices, e.g.
$\alpha = (\alpha_1,\alpha_2,\ldots,\alpha_n)\in \Zbb_{\geq 0}^n$, and employ the standard notation $D^\alpha = \del{1}^{\alpha_1} \del{2}^{\alpha_2}\cdots
\del{n}^{\alpha_n}$ for spatial partial derivatives.

\vspace{0.15cm}

\noindent \textbf{Inner-products and matrix inequalities:}
We employ the notation
\eqn{ipedef}{
\ipe{\xi}{\zeta} = \xi^T \zeta, \qquad \xi,\zeta \in \Rbb^N,
}
for the Euclidean inner-product on $\Rbb^N$, and we denote the $L^2$ inner-product on $\Tbb^n$ by
\eqn{ipdef}{
\ip{u}{v} = \int_{\Tbb^n} \ipe{u(x)}{v(x)}\, d^n x.
}
For matrices $A,B\in \Mbb{N}$, we define
\eqn{Minq}{
A \leq B \quad \Longleftrightarrow \quad \ipe{\xi}{A\xi} \leq \ipe{\xi}{B\xi}, \quad \forall \: \xi \in \Rbb^N.
}

\vspace{0.15cm}

\noindent \textbf{Constants and inequalities:}
We use that standard notation
\eqn{asimb}{
a \lesssim b
}
for inequalities of the form
\eqn{aleqb}{
a \leq Cb
}
in situations where the precise value or dependence on other quantities of the constant $C$ is not required.
On the other hand, when the dependence of the constant on other inequalities needs to be specified, for
example if the constant depends on the norm $\norm{u}_{L^\infty}$, we use the notation
\eqn{cdep}{
C=C(\norm{u}_{L^\infty}).
}
Constants of this type will always be non-negative, non-decreasing, continuous functions of their arguments.

\vspace{0.15cm}

With the preliminaries out of the way, we now state the calculus inequalities:

\begin{thm}{\emph{[H\"{o}lder's inequality]}} \label{Holder}
If $0< p,q,r \leq \infty$ satisfy $1/p+1/q = 1/r$, then
\eqn{Holder1}{
\norm{uv}_{L^r} \leq \norm{u}_{L^p}\norm{v}_{L^q}
}
for all $u\in L^p(\Tbb^n)$ and $v\in L^q(\Tbb^n)$.
\end{thm}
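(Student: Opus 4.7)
The plan is to reduce the general case $1/p+1/q=1/r$ to the classical $r=1$ conjugate-exponent case via a power substitution, and to establish the latter from Young's pointwise inequality.

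First I would dispose of the degenerate cases where one of the exponents is infinite. If $q=\infty$, then $1/p=1/r$, i.e. $p=r$, and the pointwise bound $|u(x)v(x)|\leq |u(x)|\,\norm{v}_{L^\infty}$ valid almost everywhere immediately yields $\norm{uv}_{L^r}\leq \norm{u}_{L^p}\norm{v}_{L^\infty}$ after raising to the $r$-th power and integrating over $\Tbb^n$. The case $p=\infty$ is symmetric, and the case $p=q=r=\infty$ follows from the essential-supremum product inequality.

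Next I would treat the core case $r=1$, with $1<p,q<\infty$ satisfying $1/p+1/q=1$. The key input is Young's pointwise inequality
\[
ab\leq \frac{a^p}{p}+\frac{b^q}{q}, \qquad a,b\geq 0,
\]
which follows from the concavity of $\ln$ via $\ln(a^p/p+b^q/q)\geq (1/p)\ln(a^p)+(1/q)\ln(b^q)=\ln(ab)$. Assuming $\norm{u}_{L^p}$ and $\norm{v}_{L^q}$ are both strictly positive (else $uv\equiv 0$ a.e.\ and the inequality is trivial), I would apply Young pointwise to $a=|u(x)|/\norm{u}_{L^p}$ and $b=|v(x)|/\norm{v}_{L^q}$, and integrate over $\Tbb^n$ to obtain
\[
\frac{\norm{uv}_{L^1}}{\norm{u}_{L^p}\norm{v}_{L^q}}\leq \frac{1}{p}\cdot 1+\frac{1}{q}\cdot 1=1.
\]

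For general $r\in(0,\infty)$ with $1/p+1/q=1/r$, I would set $p'=p/r$ and $q'=q/r$, so that $1/p'+1/q'=1$ and $p',q'\in[1,\infty]$. Applying the $r=1$ case (or the infinite-exponent case already handled) to the non-negative functions $|u|^r\in L^{p'}(\Tbb^n)$ and $|v|^r\in L^{q'}(\Tbb^n)$ gives
\[
\norm{uv}_{L^r}^r=\int_{\Tbb^n}|u|^r|v|^r\,d^n x \leq \norm{\,|u|^r\,}_{L^{p'}}\norm{\,|v|^r\,}_{L^{q'}}=\norm{u}_{L^p}^r\norm{v}_{L^q}^r,
\]
and taking $r$-th roots completes the argument.

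There is essentially no hard step: H\"older's inequality is textbook and the proof above is standard. The only points requiring a bit of care are the edge cases (one or both of $p,q$ infinite; and $0<r<1$, in which $\norm{\cdot}_{L^r}$ fails to be a norm, although the functional inequality asserted here still makes sense and the reduction $p'=p/r$, $q'=q/r$ remains valid since $1/p'+1/q'=1$ with $p',q'\geq 1$). Consequently I would organize the write-up as: (i) reduce out the infinite-exponent cases; (ii) prove the conjugate-exponent case via Young's inequality on normalized integrands; (iii) deduce the general case by the power substitution.
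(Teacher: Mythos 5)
Your proof is correct and is the standard argument (Young's inequality for conjugate exponents, then the power substitution $p'=p/r$, $q'=q/r$, with the infinite-exponent cases handled separately). The paper does not prove this result itself but cites textbook references, and your argument is essentially the proof found there, so there is nothing to add.
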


\begin{thm}{\emph{[Sobolev's inequality]}} \label{Sobolev} Suppose
$1\leq p < \infty$ and $s\in \Zbb_{> n/p}$. Then
\eqn{Sobolev3}{
\norm{u}_{L^\infty} \lesssim \norm{u}_{W^{s,p}}
}
for all $u\in W^{s,p}(\Tbb^n)$.
\end{thm}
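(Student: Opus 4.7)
The plan is to establish the embedding $W^{s,p}(\Tbb^n) \hookrightarrow L^\infty(\Tbb^n)$ for integer $s$ with $sp > n$ by reducing to two classical tools on $\Rbb^n$: the Gagliardo--Nirenberg--Sobolev inequality and Morrey's inequality. Since $\Tbb^n$ is compact, a finite atlas of coordinate charts and an associated smooth partition of unity $\{\chi_\alpha\}$ reduce the desired estimate to the corresponding bound for the compactly supported functions $\chi_\alpha u$ viewed on $\Rbb^n$; the $W^{s,p}(\Rbb^n)$ norm of $\chi_\alpha u$ is controlled by $\norm{u}_{W^{s,p}(\Tbb^n)}$ via the Leibniz rule and the smoothness of $\chi_\alpha$.

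First I would invoke the Gagliardo--Nirenberg--Sobolev inequality: for $1 \leq q < n$, $W^{1,q}(\Rbb^n) \hookrightarrow L^{q^*}(\Rbb^n)$ with $1/q^* = 1/q - 1/n$. The standard proof writes $u(x)$ as an integral of $\del{i}u$ along the $i$-th coordinate line, bounds the result by the $L^1$ norm of $\del{i}u$ in that direction, takes the geometric mean over $i=1,\ldots,n$, and applies the Loomis--Whitney inequality to obtain the case $q=1$; the general case follows by applying the $q=1$ estimate to the power $|u|^\gamma$ for a suitable $\gamma > 1$. Iterating this embedding on $D^\beta u$ for multi-indices $\beta$ of successively smaller order yields, at the $j$-th step, $D^\beta u \in L^{p_j}(\Rbb^n)$ for all $|\beta| \leq s - j$, where $1/p_j := 1/p - j/n$, provided each intermediate exponent stays strictly below $n$.

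To conclude, once the iteration reaches an exponent $p_j > n$ (which happens for some $j < s$ precisely because $s > n/p$), I would apply Morrey's inequality: for $q > n$, $W^{1,q}(\Rbb^n) \hookrightarrow C^{0,1-n/q}(\Rbb^n)$ and in particular into $L^\infty$. This is proved via the pointwise representation $u(x) - \bar{u}_B = c_n \int_B \ipe{x-y}{\nabla u(y)} |x-y|^{-n}\, dy$ on any ball $B$ containing $x$, combined with Hölder's inequality against the weakly singular kernel $|x-y|^{1-n}$, whose $L^{q/(q-1)}$ mass over $B$ is finite exactly when $q > n$. Since at least one derivative is still available ($j < s$), the $L^{p_j}$-bound on $u$ and $\nabla u$ obtained from the iteration feeds directly into Morrey and closes the estimate.

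The main obstacle is the bookkeeping near the critical exponent $q = n$, where $W^{1,n}$ fails to embed into $L^\infty$. When $1/p - j/n = 0$ for some intermediate integer $j$, the Gagliardo--Nirenberg step cannot be applied; on a bounded domain one side-steps this by invoking Hölder's inequality with the compact support to lower the exponent slightly below $n$ before the next iteration, and then either iterate once more or apply Morrey at an exponent strictly greater than $n$. The strict inequality $s > n/p$ with $s$ integer leaves enough headroom so that at some step $j \leq s-1$ one lands in the Morrey regime $p_j > n$, and the argument always closes; the $\Tbb^n$ setting (finite measure, smooth partition of unity) makes all lower-order $L^q$ bounds interchangeable up to constants, which further streamlines the endpoint handling.
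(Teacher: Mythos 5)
Your proposal is correct: it is the standard Sobolev embedding argument (chart/partition-of-unity reduction to $\Rbb^n$, iterated Gagliardo--Nirenberg--Sobolev steps until the exponent exceeds $n$, then Morrey's inequality, with the usual Hölder detour at the critical exponent), which is exactly the proof found in the textbooks the paper cites. The paper itself gives no proof of Theorem \ref{Sobolev} --- it states the result as a well-known calculus inequality and refers to \cite{AdamsFournier:2003}, \cite{Friedman:1976} and \cite{TaylorIII:1996} --- so your argument matches the intended (referenced) proof.
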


\begin{thm}{\emph{[Product and commutator estimates]}} \label{calcpropB} $\;$

\begin{enumerate}[(i)]
\item
Suppose $1\leq p_1,p_2,q_1,q_2\leq \infty$, $s=|\alpha|\in \Zbb_{\geq 1}$, and
\eqn{calcpropB.1}{
\frac{1}{p_1}+\frac{1}{p_2} = \frac{1}{q_1} + \frac{1}{q_2} = \frac{1}{r}.
}
Then\footnote{Here, we are using the standard notation $\norm{D^s u}^p_{L^p} = \sum_{|\alpha|=s} \norm{D^\alpha u}_{L^p}^p$.}
\alin{calcpropB.2}{
\norm{D^\alpha(uv)}_{L^r} \lesssim \norm{D^s u}_{L^{p_1}}\norm{v}_{L^{q_1}} + \norm{u}_{L^{p_2}}\norm{D^s v}_{L^{q_2}} \label{clacpropB.2.1}
\intertext{and}
\norm{D^\alpha(uv)-uD^\alpha v}_{L^r} \lesssim \norm{Du}_{L^{p_1}}\norm{v}_{W^{s-1,q_1}} + \norm{Du}_{
W^{s-1,p_2}}\norm{v}_{L^{q_2}}
}
for all $u,v \in C^\infty(\Tbb^n)$.
\item[(ii)]  If $s_1,s_2,s_3\in \Zbb_{\geq 0}$, $\;s_1,s_2\geq s_3$,  $1\leq p \leq \infty$, and $s_1+s_2-s_3 > n/p$, then
\eqn{calcpropB.3}{
\norm{uv}_{W^{s_3,p}} \lesssim \norm{u}_{W^{s_1,p}}\norm{v}_{W^{s_2,p}}
}
for all $u\in W^{s_1,p}(\Tbb^n)$ and $v\in W^{s_2,p}(\Tbb^n)$.
\end{enumerate}
\end{thm}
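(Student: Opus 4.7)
The plan is to deduce both parts from three standard ingredients already at hand: the Leibniz rule $D^\alpha(uv) = \sum_{\beta \leq \alpha} \binom{\alpha}{\beta} D^\beta u\, D^{\alpha-\beta} v$, Hölder's inequality (Theorem~\ref{Holder}), and the Gagliardo--Nirenberg interpolation inequality for part (i) together with Sobolev embedding (Theorem~\ref{Sobolev}) for part (ii). No harmonic-analytic tools are needed because all Sobolev indices here are non-negative integers, so every estimate reduces, after Leibniz expansion, to a product of classical $L^p$-type bounds on integer-order derivatives.

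For the product estimate in part~(i), I would expand $D^\alpha(uv)$ by Leibniz and bound each summand $D^\beta u\, D^{\alpha - \beta} v$ with $|\beta| = j$ by Hölder with exponents $\tilde{p}_j,\tilde{q}_j$ satisfying $1/\tilde{p}_j + 1/\tilde{q}_j = 1/r$, and then interpolate each factor between the extremal norms via Gagliardo--Nirenberg:
\[
\|D^j u\|_{L^{a_j}} \lesssim \|D^s u\|_{L^{p_1}}^{j/s}\|u\|_{L^{p_2}}^{1 - j/s}, \qquad \tfrac{1}{a_j} = \tfrac{j/s}{p_1} + \tfrac{1 - j/s}{p_2},
\]
and analogously for $v$. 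The endpoint cases $j = 0$ and $j = s$ are immediate; for the intermediate $j$, matching $\tilde{p}_j = a_j$ and its dual $\tilde{q}_j$ is compatible with the hypothesis $1/p_1 + 1/p_2 = 1/q_1 + 1/q_2 = 1/r$, so Young's inequality absorbs the resulting mixed power products $\|D^s u\|_{L^{p_1}}^{j/s}\|u\|_{L^{p_2}}^{1-j/s}\|D^s v\|_{L^{q_2}}^{(s-j)/s}\|v\|_{L^{q_1}}^{j/s}$ into the stated linear sum. For the commutator estimate, the $\beta = 0$ Leibniz term is precisely what is subtracted off, so every remaining summand has $|\beta| \geq 1$; I would then write $D^\beta u = D^{\beta - e_{j_0}}(\partial_{j_0} u)$ for some index $j_0$ with $\beta_{j_0} \geq 1$ and rerun the previous argument with $Du$ in place of $u$ and the total derivative order reduced by one, which produces the asserted bound in terms of $\|Du\|_{L^{p_1}}$, $\|v\|_{W^{s-1,q_1}}$, $\|Du\|_{W^{s-1,p_2}}$, and $\|v\|_{L^{q_2}}$.

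For part~(ii), I would again expand $D^\alpha(uv)$ by Leibniz with $|\alpha| = s_3$ and estimate $D^\beta u\, D^{\alpha - \beta} v$, $|\beta| = j$, via Hölder with exponents $p_1, p_2$ satisfying $1/p_1 + 1/p_2 = 1/p$ chosen to accommodate the Sobolev embeddings
\[
W^{s_1 - j, p}(\Tbb^n) \hookrightarrow L^{p_1}(\Tbb^n), \qquad W^{s_2 - (s_3 - j), p}(\Tbb^n) \hookrightarrow L^{p_2}(\Tbb^n).
\]
The hypothesis $s_1 + s_2 - s_3 > n/p$, together with $s_1, s_2 \geq s_3$, is exactly what guarantees for every $0 \leq j \leq s_3$ the existence of such a pair: either $s_1 - j > n/p$ and the first factor embeds into $L^\infty$ (take $p_1 = \infty$, $p_2 = p$), symmetrically for the second factor, or both derivative orders are subcritical and the integrability deficit $n/p - (s_1 - j)$ and $n/p - (s_2 - (s_3 - j))$ sum to something strictly less than $n/p$, leaving room to choose $p_1, p_2$. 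The main potential obstacle is this routine case analysis at the borderline values of $j$; the strict inequality in the hypothesis provides just enough slack to avoid the critical Sobolev endpoint in every case, after which the bound $\|u\|_{W^{s_1,p}}\|v\|_{W^{s_2,p}}$ follows on summing over $\beta$.
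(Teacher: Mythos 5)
The paper itself contains no proof of Theorem \ref{calcpropB}: it is quoted from the standard references (Adams--Fournier, Friedman, Taylor), which establish these estimates essentially by the route you propose (Leibniz expansion, H\"{o}lder, Gagliardo--Nirenberg interpolation for (i), Sobolev embedding for (ii)), and your sketch of part (ii) is fine. The genuine gap is in part (i), at precisely the step you assert without checking: that the H\"{o}lder/Gagliardo--Nirenberg exponent matching ``is compatible with the hypothesis'' $1/p_1+1/p_2=1/q_1+1/q_2=1/r$. With $1/a_j=(j/s)/p_1+(1-j/s)/p_2$ and $1/b_j=((s-j)/s)/q_2+(j/s)/q_1$ one finds
\begin{equation*}
\frac{1}{a_j}+\frac{1}{b_j}
=\frac{j}{s}\Bigl(\frac{1}{p_1}+\frac{1}{q_1}\Bigr)+\frac{s-j}{s}\Bigl(\frac{1}{p_2}+\frac{1}{q_2}\Bigr)
=\frac{1}{r}+\Bigl(\frac{2j}{s}-1\Bigr)\Bigl(\frac{1}{q_1}-\frac{1}{p_2}\Bigr),
\end{equation*}
so the pairing closes for every $0\le j\le s$ only when $p_2=q_1$ (equivalently $p_1=q_2$). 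Under the constraint as stated, one of the endpoint Leibniz terms already fails: if $q_1<p_2$, the term $D^\alpha u\cdot v$ would have to be bounded in $L^r$ by $\|D^s u\|_{L^{p_1}}\|v\|_{L^{q_1}}$ with $1/p_1+1/q_1>1/r$, which H\"{o}lder does not provide, and Young's inequality cannot repair the mismatched homogeneities afterwards.

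This is not merely a bookkeeping omission you could patch: with the stated exponent constraint the inequality is actually false in that regime, so no argument can close the gap. For example, on $\Tbb^1$ take $s=1$, $r=1$, $p_1=p_2=2$, $q_1=4/3$, $q_2=4$ (so $1/p_1+1/p_2=1/q_1+1/q_2=1$), let $u(x)=N^{-1}\sin(Nx)\,\chi(x/w)$ for a fixed cutoff $\chi$, and let $v$ be a cutoff equal to $1$ on an interval of length comparable to $w$ containing the support of $u$. Then $\|D(uv)\|_{L^1}=\|Du\|_{L^1}\gtrsim w$, while $\|Du\|_{L^2}\|v\|_{L^{4/3}}+\|u\|_{L^2}\|Dv\|_{L^4}\lesssim w^{5/4}+N^{-1}w^{-1/4}$, which is $o(w)$ once $N\ge w^{-2}$ and $w\to 0$. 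What is true---and what the cited references prove, by exactly your interpolation argument followed by Young's inequality with exponents $s/j$ and $s/(s-j)$---is the version with the pairing $1/p_1+1/q_1=1/p_2+1/q_2=1/r$, and the same correction applies to the commutator estimate. That corrected version covers every application in the paper, where $(p_1,q_1,p_2,q_2)=(2,\infty,\infty,2)$ and the two formulations coincide. So you should either prove the theorem under the corrected pairing or add the hypothesis $p_2=q_1$; as written, the exponent-matching step is a genuine gap in your proof.
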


\begin{thm}{\emph{[Moser's estimates]}}  \label{calcpropC}
Suppose $s\in \Zbb_{\geq 1}$, $1\leq p \leq \infty$, $|\alpha|\leq s$, $f\in C^s(\Rbb)$, $f(0) = 0$,
and $V$ is open and bounded in $\Rbb$. Then
\eqn{calcpropC.1}{
\norm{D^\alpha f(u)}_{L^{p}} \leq C\bigl(\norm{f}_{C^s(\overline{V})}\bigr)(1+\norm{u}^{s-1}_{L^\infty})\norm{u}_{W^{s,p}}
}
for all $u \in C^0(\Tbb^n)\cap L^\infty(\Tbb^n)\cap W^{s,p}(\Tbb^n)$ with
$u(\xv) \in V$ for all $\xv\in \Tbb^n$.
\end{thm}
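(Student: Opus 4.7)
The plan is to establish the estimate by the standard Fa\`a di Bruno expansion of $D^\alpha f(u)$, followed by H\"older's inequality and the Gagliardo--Nirenberg interpolation inequality on $\Tbb^n$, with the $|\alpha|=0$ case handled separately using the hypothesis $f(0)=0$.

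First, for $|\alpha|=0$, I would observe that since $f(0)=0$ and $u(\xv)\in V$ for every $\xv$, the fundamental theorem of calculus gives $f(u)=\int_0^1 f'(tu)u\,dt$, so $|f(u)|\leq \norm{f'}_{L^\infty(\overline{V})}|u|$ pointwise and hence $\norm{f(u)}_{L^p}\leq \norm{f}_{C^1(\overline{V})}\norm{u}_{L^p}$, which is dominated by the right-hand side of \eqref{calcpropC.1}.

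For $1\leq |\alpha|=m\leq s$, I would apply the chain rule (Fa\`a di Bruno) to write
\eqn{plan1}{
D^\alpha f(u) \;=\; \sum_{k=1}^{m}\; \sum_{(\beta_1,\dots,\beta_k)} c_{k,\beta}\, f^{(k)}(u)\,D^{\beta_1}u\cdots D^{\beta_k}u,
}
where the inner sum ranges over ordered $k$-tuples of multi-indices with $|\beta_j|\geq 1$ and $\sum_{j}|\beta_j|=m$. For a fixed term I would bound $\norm{f^{(k)}(u)}_{L^\infty}\leq \norm{f}_{C^s(\overline{V})}$ using that $u$ takes values in $\overline{V}$ and $k\leq m\leq s$. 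Then, applying H\"older's inequality (Theorem \ref{Holder}) with the exponents $p_j=pm/|\beta_j|$, which satisfy $\sum_j 1/p_j = 1/p$, I get
\eqn{plan2}{
\norm{\textstyle\prod_{j} D^{\beta_j} u}_{L^p} \;\leq\; \prod_{j=1}^{k}\norm{D^{\beta_j} u}_{L^{pm/|\beta_j|}}.
}
The Gagliardo--Nirenberg interpolation inequality on $\Tbb^n$ then yields $\norm{D^{\beta_j}u}_{L^{pm/|\beta_j|}}\lesssim \norm{u}_{L^\infty}^{1-|\beta_j|/m}\norm{D^m u}_{L^p}^{|\beta_j|/m}$, so that multiplying these bounds gives exactly $\norm{u}_{L^\infty}^{k-1}\norm{D^m u}_{L^p}$ (the exponents of $\norm{u}_{L^\infty}$ sum to $k-1$ and those of $\norm{D^m u}_{L^p}$ sum to $1$). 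Summing over $k=1,\dots,m$ and using $k-1\leq m-1\leq s-1$ produces the factor $(1+\norm{u}_{L^\infty}^{s-1})$, and $\norm{D^m u}_{L^p}\leq \norm{u}_{W^{s,p}}$ completes the estimate for $|\alpha|=m\geq 1$.

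The only mild subtlety will be justifying the use of Fa\`a di Bruno and of Gagliardo--Nirenberg at the claimed $W^{s,p}$ regularity: since the bound is supposed to hold for arbitrary $u\in C^0\cap L^\infty\cap W^{s,p}$ and $f\in C^s$ (not smooth), I would first prove the inequality for $u\in C^\infty(\Tbb^n)$ with values in $V$ and $f\in C^\infty(\Rbb)$ with $f(0)=0$, where the chain-rule manipulations are legitimate, and then pass to the limit by standard mollification in both $u$ and $f$, using that the right-hand side of \eqref{calcpropC.1} is continuous in $u$ with respect to the $L^\infty\cap W^{s,p}$ topology and in $f$ with respect to the $C^s(\overline{V})$ topology. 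The main analytic content is the Gagliardo--Nirenberg step; the rest is bookkeeping of multi-index combinatorics and H\"older exponents.
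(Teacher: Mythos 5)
The paper does not actually prove Theorem \ref{calcpropC}: it is listed among the standard calculus inequalities whose proofs are deferred to the cited textbooks (Adams--Fournier, Friedman, Taylor), so there is no in-paper argument to compare against. Your proposal is precisely the classical textbook route -- Fa\`a di Bruno, then H\"older with exponents $p_j=pm/|\beta_j|$ (the exponent bookkeeping $\sum_j 1/p_j=1/p$ and the resulting powers $\norm{u}_{L^\infty}^{k-1}\norm{D^m u}_{L^p}$ are correct), then Gagliardo--Nirenberg interpolation, with a density/mollification step to reach $f\in C^s$ and $u\in C^0\cap L^\infty\cap W^{s,p}$ -- and it is essentially sound. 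Two small points deserve care. First, on $\Tbb^n$ the interpolation inequality should be used in its inhomogeneous form, e.g.\ $\norm{D^{\beta_j}u}_{L^{pm/|\beta_j|}}\lesssim \norm{u}_{L^\infty}^{1-|\beta_j|/m}\norm{u}_{W^{m,p}}^{|\beta_j|/m}$ (a lower-order term is unavoidable on a compact manifold); this changes nothing in the bookkeeping since $\norm{u}_{W^{m,p}}\leq\norm{u}_{W^{s,p}}$. Second, your $|\alpha|=0$ step $f(u)=\int_0^1 f'(tu)\,u\,dt$ only controls $f'$ on the segment from $0$ to $u(x)$, which need not lie in $\overline{V}$ for a general open bounded $V\subset\Rbb$; either restrict to the case where the convex hull of $\{0\}\cup V$ lies in $\overline{V}$ (the situation in which the paper uses the estimate, $u$ near $0$), or note that otherwise $\mathrm{dist}(0,\overline{V})>0$ forces $\norm{u}_{L^p}$ to be bounded below, so $\norm{f(u)}_{L^p}\leq\norm{f}_{C^0(\overline{V})}|\Tbb^n|^{1/p}$ is still dominated by the right-hand side, at the cost of a constant depending also on $V$, $s$, $p$, $n$ -- which is consistent with the paper's conventions on constants. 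With these adjustments (and the usual caveat that for $p=\infty$ the mollification limit is taken using pointwise a.e.\ convergence of derivatives rather than norm convergence), your argument gives a complete proof matching what the cited references do.
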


\begin{lem} {\emph{[Ehrling's lemma]}} \label{Ehrling}
Suppose $1\leq p < \infty$, $s_0,s,s_1\in \Zbb_{\geq 0}$, and $s_0 < s < s_1$. Then for any $\epsilon>0$ there exists a constant $C=C(\epsilon)$ such
that
\eqn{Ehrling1}{
\norm{u}_{W^{s,p}} \leq \epsilon \norm{u}_{W^{s_1,p}} + C(\epsilon)\norm{u}_{W^{s_0,p}}
}
for all $u\in W^{s_1,p}(\Tbb^n)$.
\end{lem}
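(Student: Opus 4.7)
The plan is to prove Ehrling's lemma by contradiction, exploiting the fact that on the compact manifold $\Tbb^n$ the embedding $W^{s_1,p}(\Tbb^n)\hookrightarrow W^{s,p}(\Tbb^n)$ is compact (Rellich--Kondrachov), while $W^{s,p}(\Tbb^n)\hookrightarrow W^{s_0,p}(\Tbb^n)$ is continuous. Suppose the conclusion fails. Then there exist $\epsilon_0>0$ and a sequence $\{u_n\}\subset W^{s_1,p}(\Tbb^n)$ such that
\eqn{Eproposal1}{
\norm{u_n}_{W^{s,p}} > \epsilon_0 \norm{u_n}_{W^{s_1,p}} + n\, \norm{u_n}_{W^{s_0,p}}.
}
After rescaling (none of the $u_n$ can vanish), I may normalize $\norm{u_n}_{W^{s,p}}=1$, which yields both $\norm{u_n}_{W^{s_1,p}} < \epsilon_0^{-1}$ and $\norm{u_n}_{W^{s_0,p}} < n^{-1}$.

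Next, since $\{u_n\}$ is bounded in $W^{s_1,p}(\Tbb^n)$ and the embedding $W^{s_1,p}(\Tbb^n)\hookrightarrow W^{s,p}(\Tbb^n)$ is compact (here one uses $s<s_1$ and compactness of $\Tbb^n$), a subsequence $u_{n_k}$ converges strongly in $W^{s,p}(\Tbb^n)$ to some $u\in W^{s,p}(\Tbb^n)$ with $\norm{u}_{W^{s,p}}=1$. Continuity of the embedding into $W^{s_0,p}(\Tbb^n)$ gives $u_{n_k}\to u$ in $W^{s_0,p}(\Tbb^n)$, but the bound $\norm{u_{n_k}}_{W^{s_0,p}}<n_k^{-1}\to 0$ forces $u=0$ in $W^{s_0,p}(\Tbb^n)$ and hence $u\equiv 0$. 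This contradicts $\norm{u}_{W^{s,p}}=1$, proving the lemma.

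The only non-routine ingredient is the compact embedding on $\Tbb^n$; on the torus this follows quickly from the Fourier series characterization $\norm{u}_{W^{s,2}}^2\simeq \sum_{k\in\Zbb^n}(1+|k|^2)^s|\hat u(k)|^2$ in the $p=2$ case (truncate to $|k|\le N$ and let $N\to\infty$), and in general from Rellich--Kondrachov applied on a fundamental domain combined with periodicity. An alternative route that avoids compactness altogether is to invoke the Gagliardo--Nirenberg interpolation inequality
\eqn{Eproposal2}{
\norm{u}_{W^{s,p}} \lesssim \norm{u}_{W^{s_1,p}}^{\theta}\norm{u}_{W^{s_0,p}}^{1-\theta}, \qquad \theta=\frac{s-s_0}{s_1-s_0}\in(0,1),
}
and then apply Young's inequality $ab\le \epsilon a^{1/\theta}+C(\epsilon)b^{1/(1-\theta)}$ with $a=\norm{u}_{W^{s_1,p}}^{\theta}$ and $b=\norm{u}_{W^{s_0,p}}^{1-\theta}$ to convert the product bound into the additive Ehrling form. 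The main (mild) obstacle in either approach is verifying the underlying embedding or interpolation estimate on $\Tbb^n$; the contradiction argument sketched above is arguably the cleanest since it needs only compactness plus a rescaling trick and sidesteps the quantitative interpolation constant entirely.
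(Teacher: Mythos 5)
Your argument is correct. Note that the paper does not actually prove this lemma: it is stated in the appendix of calculus inequalities with the remark that the proofs are "well known" and may be found in the cited books (Adams--Fournier, Friedman, Taylor). Your compactness-plus-contradiction proof is precisely the classical argument those references contain: negate the inequality, normalize $\norm{u_n}_{W^{s,p}}=1$ so that $\norm{u_n}_{W^{s_1,p}}\leq \epsilon_0^{-1}$ and $\norm{u_n}_{W^{s_0,p}}\leq n^{-1}$, extract a subsequence converging strongly in $W^{s,p}(\Tbb^n)$ by Rellich--Kondrachov (valid here since $s<s_1$, $\Tbb^n$ is compact and $1\leq p<\infty$), and conclude the limit is simultaneously of unit $W^{s,p}$-norm and zero, a contradiction. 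This covers the full stated range of $p$ and is the cleanest route. One caution about your alternative: the multiplicative Gagliardo--Nirenberg/interpolation bound $\norm{u}_{W^{s,p}}\lesssim \norm{u}_{W^{s_1,p}}^{\theta}\norm{u}_{W^{s_0,p}}^{1-\theta}$ with full (inhomogeneous) norms on the torus is immediate only for $p=2$ via Fourier series; for general $p$ it requires genuine interpolation theory (identification of complex interpolation spaces with Bessel potential spaces, valid for $1<p<\infty$) and is delicate at $p=1$, so it does not painlessly give the lemma in the stated generality, whereas your main compactness argument does.
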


\sect{Hyp}{Symmetric hyperbolic systems}

In this appendix, we analyze the initial value problem for symmetric hyperbolic systems of the form:
\lalin{symivp}{
B^i\del{i}u & = \frac{1}{t}\Bc\Pbb u + H + G\hspace{0.4cm} \text{in $[T_0,T_1)\times \Tbb^n$} \label{symivp.1}, \\
u &= u_0 \hspace{2.4cm} \text{in $\{T_0\}\times \Tbb^n$}, \label{symivp.2}
}
where
\begin{enumerate}[(i)]
\item $T_0<T_1\leq 0$,
\item $\Pbb$ is a constant, symmetric projection operator, i.e. $\Pbb^2=\Pbb$, $\Pbb^T=\Pbb$ and $\del{i}\Pbb=0$,
\item  $v=v(t,x)$ is a $\Rbb^M$-valued map, $u=u(t,x)$, $G(t,x)$ and $H=H(t,x,v,u)$ are $\Rbb^N$-valued maps, $H \in C^0\bigl([T_0,0], C^\infty(\Tbb^n \times \Rbb^M \times \Rbb^N)\bigr)$
and satisfies $H(t,x,v,0)=0$,
\item $B^i=B^i(t,x,v,u)$, and $\Bc=\Bc(t,x,v,u)$ are $\Mbb{N}$-valued maps, and $B^I,\Bc\in  C^0\bigl([T_0,0], C^\infty(\Tbb^n \times \Rbb^M\times \Rbb^N)\bigr)$,  $B^0\in  C^1\bigl([T_0,0], C^\infty(\Tbb^n \times \Rbb^M\times \Rbb^N)\bigr)$
and they satisfy
\eqn{symmetric}{
(B^i)^{T}=B^i \AND [\Pbb,\Bc]=0,
}
\item there exists constants $\kappa,\gamma_1,\gamma_2>0$ such that
\leqn{A0bnd}{
\frac{1}{\gamma_1} \id \leq  B^0(t,x,v,u) \leq \frac{1}{\kappa} \Bc(t,x,v,u) \leq \gamma_2 \id
}
for all $(t,x,v,u) \in [T_0,0]\times \Tbb^n\times\Rbb^M\times \Rbb^N$,
\item
\leqn{Bocross}{
\Pbb^\perp B^0(t,x,v,\Pbb^\perp u)
 \Pbb = \Pbb B^0(t,x,v,\Pbb^\perp u)\Pbb^\perp = 0
}
for all $(t,x,v,u) \in [T_0,0]\times \Tbb^n\times \Rbb^M\times \Rbb^N$ where
\eqn{Pbbperp}{
\Pbb^\perp = \id -\Pbb
}
is the complementary projection operator,\footnote{In other words, $\Pbb^\perp B^0(t,x,v,u)\Pbb = \Pbb^\perp[\Bt^0(t,x,v,u)\cdot\Pbb u]\Pbb$
and $\Pbb B^0(t,x,v,u)\Pbb^\perp = \Pbb[\Bh^{0}(t,x,v,u)\cdot\Pbb u]\Pbb$ for matrix-valued maps
$\Bt^0(t,x,v,u)$ and $\Bh^0(t,x,v,u)$ with the same regularity as $B^0(t,x,v,u)$.}
\item and there exists constants $\theta,\beta_1,\beta_2,\beta_3,\beta_4 \geq 0$ and $\omega > 0$ such that
\lalin{Bodot}{
\bigl|\Pbb^\perp \bigl[D_u B^0(t,x,v,u)\cdot (B^0(t,x,v,u))^{-1}\Bc(t,x,v,u)\Pbb u\bigr]\Pbb^\perp\bigr|_{\text{op}} &\leq |t|\theta +  \frac{2\beta_1}{\omega+|\Pbb^\perp u|^2} |\Pbb u|^2, \label{Bodot.1} \\
\bigl|\Pbb^\perp \bigl[D_u B^0(t,x,v,u)\cdot (B^0(t,x,v,u))^{-1}\Bc(t,x,v,u)\Pbb u\bigr]\Pbb\bigr|_{\text{op}} &\leq |t|\theta +  \frac{2\beta_2}{\sqrt{\omega+|\Pbb^\perp u|^2}} |\Pbb u|, \label{Bodot.2}\\
\bigl|\Pbb \bigl[D_u B^0(t,x,v,u)\cdot (B^0(t,x,v,u))^{-1}\Bc(t,x,v,u)\Pbb u\bigr]\Pbb^\perp\bigr|_{\text{op}} &\leq |t|\theta +  \frac{2\beta_3}{\sqrt{\omega+|\Pbb^\perp u|^2}} |\Pbb u|, \label{Bodot.3}
\intertext{and}
\bigl|\Pbb \bigl[D_u B^0(t,x,v,u)\cdot (B^0(t,x,v,u))^{-1}\Bc(t,x,v,u)\Pbb u\bigr]\Pbb\bigr|_{\text{op}} &\leq |t|\theta +  \beta_4 \label{Bodot.4}
}
for all $(t,x,v,u) \in [T_0,0]\times \Tbb^n\times\Rbb^M\times \Rbb^N$.
\end{enumerate}

Before proceeding, we make some observations. First, we note that the bound \eqref{A0bnd} implies that  $B^0$ and $\Bc$ are invertible and satisfy the matrix operator
bounds
\eqn{Bcop}{
|(B^0(t,x,v,u))^{-1}|_{\text{op}} \leq \gamma_1 \AND |\Bc(t,x,v,u)^{-1}|_{\text{op}} \leq \frac{\gamma_1}{\kappa}
}
for all $(t,x,v,u) \in [T_0,0]\times \Tbb^n\times\Rbb^M\times \Rbb^N$.
Second, a straightforward computation shows that the condition \eqref{Bocross}
is equivalent to
\leqn{Bocross1}{
\Pbb B^0(t,x,v,\Pbb^\perp u)^{-1}\Pbb^\perp =
\Pbb^\perp B^0(t,x,v,\Pbb^\perp u)^{-1} \Pbb   = 0
}
for all $(t,x,u) \in [T_0,0]\times \Tbb^n\times \Rbb^N$. We also note that it follows immediately
from the bounds \eqref{A0bnd} that
there exists
constants $\kappat, \gammat_1, \gammat_2 >0$, where $\gammat_1\leq \gamma_1$, $\kappa \leq \kappat$ and $\gammat_2 \leq \gamma_2$, such that
\leqn{A1bnd}{
\frac{1}{\gammat_1} \Pbb \leq  \Pbb B^0(t,x,v,u)\Pbb \leq \frac{1}{\kappat} \Pbb\Bc(t,x,v,u)\Pbb \leq \gammat_2 \Pbb
}
for all $(t,x,v,u) \in [T_0,0]\times \Tbb^n\times\Rbb^M\times \Rbb^N$.

\begin{thm} \label{symthm}
Suppose $k \in \Zbb_{>n/2+1}$, $u_0\in H^k(\Tbb^n)$, $\;v,G\in C^0\bigl([T_0,0],H^k(\Tbb^n)\bigr)$, assumptions (i)-(vii) are fulfilled, and the constants $\{\beta,\kappat,\gammat_1\}$
satisfy $0\leq \beta\gammat_1 < \kappat$ where $\beta=\sum_{i=1}^4 \beta_4$. Then there exists a $T_*\in (T_0,0)$,
and a unique classical solution $u\in C^1([T_0,T_*)\times \Tbb^n)$ that satisfies
$u\in C^0([T_0,T_*),H^k)\cap C^1([T_0,T_*),H^{k-1})$, the energy estimate
\eqn{symthm1}{
\norm{u(t)}_{H^k}^2+ \norm{G}_{L^\infty([T_0,0),H^k)}^2  - \int_{T_0}^t \frac{1}{\tau} \norm{\Pbb u(\tau)}_{H^k}^2 \,   \leq Ce^{C(t-T_0)}\Bigl(\norm{u(T_0)}_{H^k}^2+ \norm{G}_{L^\infty([T_0,0),H^k)}^2 \Bigr)
}
for $T_0 \leq t < T_*$ where
\eqn{symthm1a}{
C=C\bigl(\norm{u}_{L^\infty([T_0,T_*),H^k)},\norm{v}_{L^{\infty}([T_0,0),H^k)},\norm{G}_{L^\infty([T_0,0),H^k)}^2,\norm{\del{t}v}_{L^{\infty}([T_0,0),H^{k-1})},\theta,\gamma_1,\gamma_2,\kappat,\beta,\omega\bigr),
}
and can be uniquely continued to a larger time interval $[T_0,T^*)$ for some $T^* \in (T_*, 0)$ provided that  $\norm{u}_{L^\infty([T_0,T_*),W^{1,\infty})}  < \infty$.

\smallskip

Moreover, for any $R>0$,
there exists a
\eqn{symth1b}{
\delta= \delta\bigl(R,\norm{v}_{L^\infty([T_0,0),H^k)},\norm{\del{t}v}_{L^\infty([T_0,0),H^{k-1})},\theta,\gamma_1,\gamma_2,\beta,\kappat,\omega\bigr)>0
}
such that if
$\norm{u_0}_{H^k}+\norm{G}_{L^\infty([T_0,0),H^k)} \leq \delta$,
 then the solution $u(t,x)$ exists on the time interval
$[T_0,0)$ and can be uniquely extended to $[T_0,0]$ as an element of $C^0([T_0,0],H^{k-1})$
satisfying
\eqn{symthm1ca}{
\norm{u}_{L^\infty([T_0,0],W^{1,\infty})} \leq R,
}
and
\alin{symthm1c}{
\norm{\Pbb u(t)}_{H^{k-1}} &\lesssim \begin{cases}
-t &  \text{if $\kappat-\beta_4\gammat_1 > 1$} \\
(-t)^{\kappat-\beta_4\gammat_1-\sigma} & \text{if $0<\sigma < \kappat-\beta_4\gammat_1 \leq 1$} \end{cases}, \\
\norm{\Pbb^\perp u(t) - \Pbb^\perp u(0)}_{H^{k-1}} &\lesssim \begin{cases}  -t
& \text{if $\kappat-\beta_4\gammat_1 > 1$ or $[B^0,\Pbb]=0$} \\
 -t +  (-t)^{2(\kappat-\beta_4\gammat_1-\sigma)} &  \text{if $0<\sigma < \kappat-\beta_4\gammat_1 \leq 1$ }
 \end{cases},
}
for $T_0\leq t \leq 0$.

\end{thm}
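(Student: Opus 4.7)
The plan is to treat \eqref{symivp.1}--\eqref{symivp.2} as a Fuchsian-type symmetric hyperbolic system and to establish existence, a weighted energy estimate, and the stated decay rates by exploiting the sign of $1/t$ on $[T_0,0)$ together with the block-structure hypotheses (v)--(vii). Local existence I would obtain by the standard iteration scheme for quasilinear symmetric hyperbolic systems applied on any subinterval $[T_0,T_*]$ with $T_*<0$, where the coefficient $1/t$ is bounded and the classical linear theory applies directly; this yields $u\in C^0([T_0,T_*),H^k)\cap C^1([T_0,T_*),H^{k-1})$ once uniform $H^k$-bounds on the iterates are obtained from the energy estimate described below. The continuation criterion then follows by the usual $W^{1,\infty}$ Moser argument combined with the calculus inequalities of Appendix~\ref{calc}.

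The heart of the argument is the weighted $H^k$ energy estimate. Set $E_\alpha(t) := \tfrac{1}{2}\ip{D^\alpha u}{B^0 D^\alpha u}$ for $|\alpha|\leq k$, and differentiate in $t$: the symmetry of the $B^i$ allows integration by parts in the spatial derivatives, while the singular source contributes
\[
\frac{1}{t}\ipe{D^\alpha u}{\Bc\,\Pbb D^\alpha u} = \frac{1}{t}\ipe{\Pbb D^\alpha u}{\Bc\,\Pbb D^\alpha u}
\]
via $[\Pbb,\Bc]=0$ and $\Pbb^2=\Pbb$. The sandwich bound \eqref{A1bnd} together with $t<0$ converts this into a dissipative contribution bounded above by $\tfrac{\kappat}{t\gammat_1}\norm{\Pbb D^\alpha u}^2_{L^2}\le 0$, which upon summation over $|\alpha|\leq k$ and integration over $[T_0,t]$ yields precisely the $-\int_{T_0}^t\tau^{-1}\norm{\Pbb u(\tau)}^2_{H^k}\,d\tau$ term on the left of \eqref{symthm1}. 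Higher-order commutators $[D^\alpha,B^i]\del{i}u$, $[D^\alpha,\Bc]\Pbb u$, and $D^\alpha H$ are absorbed using the Moser and product estimates of Theorems~\ref{calcpropB} and \ref{calcpropC}.

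The principal obstacle, and the reason for hypotheses (vi)--(vii), is controlling the bulk term $\ip{D^\alpha u}{\del{t}B^0\, D^\alpha u}$ arising from differentiation of the symmetrizer. The chain rule expands $\del{t}B^0$ into a fixed-argument piece, a $(D_vB^0)\del{t}v$ piece, and a $(D_uB^0)\del{t}u$ piece; substituting for $\del{t}u$ from \eqref{symivp.1} produces the potentially singular expression $\tfrac{1}{t}(D_uB^0)(B^0)^{-1}\Bc\,\Pbb u$. Decomposing with $\id=\Pbb+\Pbb^\perp$ on both sides and applying \eqref{Bodot.1}--\eqref{Bodot.4} block-by-block shows that every cross term carries either a harmless $|t|\theta$ prefactor or else $\beta_i$ paired with a compensating factor of $|\Pbb u|$ or $|\Pbb u|^2$; condition \eqref{Bocross}, equivalently \eqref{Bocross1}, eliminates the dangerous $\Pbb^\perp(\cdot)\Pbb$ and $\Pbb(\cdot)\Pbb^\perp$ blocks of $B^0$ at leading order, while the smallness hypothesis $\beta/\gammat_1<\kappat$ ensures the remaining $\Pbb(\cdot)\Pbb$ contribution can be absorbed into the dissipative term already extracted. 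The resulting inequality is Gr\"onwall-ready and yields \eqref{symthm1}. For small data the uniform $H^k$-bound then propagates globally on $[T_0,0)$, and combining it with $\del{t}u\in L^\infty([T_0,0),H^{k-1})$ gives the continuous extension to $C^0([T_0,0],H^{k-1})$.

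For the decay rates I would project \eqref{symivp.1} with $\Pbb$, invert $B^0$ using \eqref{Bocross1}, and read off a Fuchsian ODE of schematic form $t\,\del{t}(\Pbb u) + (\kappat-\beta_4/\gammat_1)\Pbb u = (\text{regular source})$. The integrating factor $|t|^{\kappat-\beta_4/\gammat_1}$ then gives, by direct integration, the three-case dichotomy for $\norm{\Pbb u(t)}_{H^{k-1}}$ stated in the theorem. For $\Pbb^\perp u$, the $\Pbb^\perp$-projection of \eqref{symivp.1} is regular at $t=0$ thanks to \eqref{Bocross}, so $\Pbb^\perp u$ admits a limit $\Pbb^\perp u(0)$ which can be identified by integrating the equation backward from $t=0$. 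The cross-coupling from $\Pbb u$ is at worst linear-in-$t$ when $[B^0,\Pbb]=0$, producing only the $-t$ rate, and otherwise acquires an extra bilinear-in-$\Pbb u$ contribution of size $(-t)^{2(\kappat-\beta_4/\gammat_1)}$, which accounts for the dichotomy in the $\Pbb^\perp u$ bound.
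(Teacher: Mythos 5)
Your overall strategy -- a weighted $H^k$ energy estimate that exploits the sign of the $\frac{1}{t}\Bc\Pbb u$ term, block decomposition of $\del{t}B^0$ via hypothesis (vii), and a Gr\"onwall argument in the $\Pbb/\Pbb^\perp$ splitting for the decay rates -- is essentially the paper's. There is one technical route difference and two substantive gaps.

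The route difference: the paper applies $D^\alpha\Bc^{-1}$ to \eqref{symivp.1} rather than $D^\alpha$ directly, which makes the singular source $\frac{1}{t}\Bc\Pbb u$ differentiate cleanly (no $\frac{1}{t}[D^\alpha,\Bc]\Pbb u$ commutator survives) and concentrates all singular commutators into the single term $\frac{1}{t}\Bc[D^\alpha,\Bc^{-1}B^0](B^0)^{-1}\Bc\Pbb u$. Your version produces two singular commutators; both are manageable by the same ideas, but you should be aware of the extra bookkeeping.

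The first gap is the absorption of the singular commutator. After the $\Pbb/\Pbb^\perp$ decomposition and \eqref{Bocross}/\eqref{Bocross1}, the singular commutator inner product is bounded by $\frac{C}{|t|}\norm{\Pbb u}_{H^k}\norm{\Pbb u}_{H^{k-1}}$ with a constant $C$ depending on the solution norms, which can be large. You cannot ``absorb'' this into the dissipative $\frac{\kappat-\beta/\gammat_1}{t}\nnorm{\Pbb u}_k^2$ term directly. The paper applies Ehrling's lemma (Lemma \ref{Ehrling}) and Young's inequality to split it as $\frac{C}{|t|}\bigl(\epsilon\norm{\Pbb u}_{H^k}^2 + c(\epsilon)\norm{\Pbb u}_{L^2}^2\bigr)$; the first piece is then absorbable for $\epsilon$ small, while the second is controlled by running the $L^2$ energy estimate (the paper's \eqref{symthm8c}) simultaneously and adding a suitable multiple of it to the $H^k$ estimate before applying Gr\"onwall. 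Without this two-norm interplay your estimate does not close, and this is precisely why the conclusion features the combined term $-\int_{T_0}^t\tau^{-1}\norm{\Pbb u}^2_{H^k}\,d\tau$ on the left of \eqref{symthm1}.

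The second gap concerns the decay rates. Projecting by $\Pbb$ does not produce an ODE -- $\Pbb u$ satisfies the symmetric hyperbolic system \eqref{symthm19}, and the exponent $\kappat-\beta_4/\gammat_1$ enters only as a lower bound emerging from the energy identity, not as a literal coefficient. The paper's argument is a two-step Gr\"onwall: one first derives an $L^2$ decay rate \eqref{symthm24}, uses it together with Ehrling to obtain the $H^{k-1}$ rate $(-t)^{\sigma(\kappat-\beta_4/\gammat_1)}$ for $\sigma<1$ close to $1$, and then feeds this back to sharpen to the exact exponent in \eqref{symthm26a}. Your integrating-factor description conflates these steps. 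Also, your claim that ``the $\Pbb^\perp$-projection of \eqref{symivp.1} is regular at $t=0$'' is not accurate: the projected equation still contains the singular term $\frac{1}{t}\Pbb^\perp(B^0)^{-1}\Pbb\Bc\Pbb u$, and it is only because \eqref{Bocross1} makes this term quadratic in $\Pbb u$ that the decay of $\Pbb u$ renders it integrable, producing the $(-t)^{2(\kappat-\beta_4/\gammat_1)}$ contribution in the second case.
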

\begin{proof}
First, the existence, uniqueness and continuation statements follow from standard results; for example, see
\cite[Ch.16 \S 1]{TaylorIII:1996}. Therefore given
a solution $u=u(t,x)$ defined on the time interval $[T_0,T_*)$,  $T_*\in (T_0,1)$, to \eqref{symivp.1}-\eqref{symivp.2}, we act on
\eqref{symivp.1} on the left by $D^\alpha \Bc^{-1}$ to obtain
\eqn{symthm2}{
B^0\del{t} D^\alpha u + B^I\del{I} D^\alpha u = \frac{1}{t}\Bc D^\alpha \Pbb u - \Bc [D^\alpha,\Bc^{-1} B^0]\del{t}u
-\Bc[D^\alpha,\Bc^{-1}B^I]\del{I}u + \Bc D^\alpha(\Bc^{-1}(H+G)).
}
Using \eqref{symivp.1}, we can write this as
\lalin{symthm3}{
B^0\del{t} D^\alpha u + B^I\del{I}& D^\alpha u = \frac{1}{t}\Bigl[\Bc D^\alpha \Pbb u  - \Bc[D^\alpha,\Bc^{-1}B^0](B^0)^{-1}\Bc \Pbb u \Bigr]
+ \Bc[D^\alpha,\Bc^{-1}B^0](B^0)^{-1}B^I\del{I}u  \notag \\
& - \Bc[D^\alpha,\Bc^{-1}B^0](B^0)^{-1}(H+G)
-\Bc[D^\alpha,\Bc^{-1}B^I]\del{I}u + \Bc D^\alpha(\Bc^{-1}(H+G)). \label{symthm3.1}
}
As far as energy estimates are concerned, the only potential problematic term is the one with the coefficient $\frac{1}{t}$ that becomes singular in
the limit $t\nearrow 0$. Since the rest of the terms can be
estimated using well known techniques, again see \cite[Ch.16 \S 1]{TaylorIII:1996}, we will only estimate this potentially singular term in any detail.

Setting $\alpha=0$ in \eqref{symthm3.1}, we obtain, in the usual fashion for symmetric hyperbolic
systems, the energy estimate
\leqn{symthm4}{
\frac{1}{2} \del{t} \ip{u}{B^0 u} = \frac{1}{t}\ip{u}{\Bc \Pbb u} + \frac{1}{2}\ip{u}{\del{0} B u}
 +\frac{1}{2} \ip{u}{\Div B u} + \ip{u}{H}+\ip{u}{G}
}
where
\eqn{symthm5}{
\Div B = \del{I}B^I.
}
Defining the energy norm
\eqn{symthm6}{
\nnorm{u}^2_s =\sum_{|\alpha|\leq s} \ip{D^\alpha u}{B^0 D^\alpha u},
}
we obtain from \eqref{symthm4} and the bounds \eqref{A0bnd} and \eqref{A1bnd} the estimate
\leqn{symthm8}{
 \del{t} \nnorm{u}_0^2 \leq \frac{2\kappat}{t}\nnorm{\Pbb u}_0^2  + \ip{u}{\del{t}B^0 u}+ \gamma_1\norm{\Div B}_{L^\infty} \nnorm{u}_0^2 + 2\sqrt{\gamma_1}\big(\norm{H}_{L^2}+\norm{G}_{L^2}\bigr) \nnorm{u}_0, \quad
 T_0 \leq t < T_*.
}

Using \eqref{symivp.1}, we can write
\leqn{symthm8a}{
 \ip{u}{\del{t}B^0 u} = \ip{u}{\widetilde{\del{t}B^0} u} + \frac{1}{t}\bigl\langle u\bigr|\bigl[D_u B^0(t,\cdot,v,u)\cdot (B^0(t,\cdot,v,u))^{-1}\Bc(t,\cdot,v,u)\Pbb u\bigr]u\rangle
}
where
\alin{symthm8b}{
\widetilde{\del{t}B^0} = (\del{t}&B^0)(t,x,v,u)+D_v B^0(t,x,v,u)\cdot \del{t}v \\
& + D_u B^0(t,x,v,u)\cdot\Bigl[(B^0(t,x,v,u))^{-1}\bigr(-B^I(t,x,v,u)\del{I}u+H(t,x,v,u)+G(t,x)\bigl)\Bigr].
}
Observing that
\alin{symthm8ba}{
\bigl\langle u\bigr|\bigl[D_u B^0\cdot (B^0)^{-1}\Bc\Pbb u\bigr]u\rangle &= \bigl\langle \Pbb^\perp u\bigr|\Pbb^\perp\bigl[D_u B^0\cdot (B^0)^{-1}\Bc\Pbb u\bigr]\Pbb^\perp\Pbb^\perp u\rangle +
\bigl\langle \Pbb^\perp u\bigr|\Pbb^\perp \bigl[D_u B^0\cdot (B^0)^{-1}\Bc\Pbb u\bigr]\Pbb \Pbb u\rangle \\
& + \bigl\langle \Pbb u\bigr|\Pbb \bigl[D_u B^0\cdot (B^0)^{-1}\Bc\Pbb u\bigr]\Pbb^\perp \Pbb^\perp u\rangle + \bigl\langle \Pbb u\bigr|\Pbb\bigl[D_u B^0\cdot (B^0)^{-1}\Bc\Pbb u\bigr]\Pbb \Pbb u\rangle,
}
we see from \eqref{Bodot.1}-\eqref{Bodot.4} that
\leqn{symthm8bb}{
|\bigl\langle u\bigr|\bigl[D_u B^0\cdot (B^0)^{-1}\Bc\Pbb u\bigr]u\rangle| \leq 4|t|\theta\norm{u}^2_{L^2} + \beta\norm{\Pbb u}^2_{L^2},
}
where $\beta = \sum_{i=1}^4 \beta_i$.
From \eqref{symthm8}, \eqref{symthm8a} and \eqref{symthm8bb}, we see with the help of \eqref{A0bnd} that
\leqn{symthm8c}{
 \del{t} \nnorm{u}_0^2 \leq \frac{2(\kappat-\beta\gammat_1)}{t}\nnorm{\Pbb u}_0^2  + \gamma_1\Bigl(4\theta+\norm{\widetilde{\del{t}B^0}}_{L^\infty}+\norm{\Div B}_{L^\infty}\Bigr) \nnorm{u}_0^2 + 2\sqrt{\gamma_1}\big(\norm{H}_{L^2}+\norm{G}_{L^2}\bigr) \nnorm{u}_0
}
for $T_0 \leq t < T_*$.

Next, using $[\Pbb,\Bc]=0$, we observe, for $|\alpha|\leq k$, that
\lalin{symthm9}{
&\ip{D^\alpha u}{\Bc[D^\alpha,\Bc^{-1}B^0](B^0)^{-1}\Bc \Pbb u}  = \ip{D^\alpha \Pbb^{\perp}u }{\Bc[D^\alpha,\Bc^{-1}\Pbb^\perp B^0 \Pbb](B^0)^{-1}\Bc \Pbb u}  \notag\\
& \hspace{2.0cm}  + \ip{D^\alpha \Pbb^\perp  u }{\Bc[D^\alpha,\Bc^{-1}\Pbb^\perp  B^0]\Pbb^\perp (B^0)^{-1}\Pbb\Bc \Pbb u}  + \ip{D^\alpha \Pbb u}{\Bc[D^\alpha,\Bc^{-1}B^0](B^0)^{-1}\Bc \Pbb u}.
\notag
}
Applying the Sobolev, commutator, product and Moser calculus inequalities, i.e. Theorems \ref{Sobolev}, \ref{calcpropB}
and \ref{calcpropC}, to the above expression, we find
with the help of \eqref{Bocross} and \eqref{Bocross1} the estimate
\eqn{symthm10a}{
\ip{D^\alpha u}{\Bc[D^\alpha,\Bc^{-1}B^0](B^0)^{-1}\Bc \Pbb u} \leq C\bigl(K_1,K_2\bigr)\norm{\Pbb u}_{H^{k}}\norm{\Pbb u}_{H^{k-1}}
}
where
\eqn{K1def}{
K_1 = \norm{u}_{L^\infty([T_0,T_*),H^k)} \AND  K_2 =\norm{v}_{L^\infty([T_0,0),H^k)}.
}
Hence, for any $\epsilon > 0$, we find that
\leqn{symthm10}{
\ip{D^\alpha u}{\Bc[D^\alpha,\Bc^{-1}B^0](B^0)^{-1}\Bc \Pbb u} \leq C\bigl(K_1,K_2\bigr)\Bigl(\epsilon \norm{\Pbb u}_{H^{k}}^2 + c(\epsilon)\norm{\Pbb u}_{L^2}^2 \Bigr)
}
by Ehrling's lemma, Lemma \ref{Ehrling}, and Young's inequality in the form $ab \leq \frac{1}{2r}a^2 + \frac{r}{2}b^2$
for $a,b\geq 0$ and $r>0$. We also see from
\alin{symthm10a}{
&\bigl\langle D^\alpha u\bigr|\bigl[D_u B^0\cdot (B^0)^{-1}\Bc\Pbb u\bigr]D^\alpha u\rangle = \\
& \hspace{1.5cm} \bigl\langle \Pbb^\perp D^\alpha u\bigr|\Pbb^\perp\bigl[D_u B^0\cdot (B^0)^{-1}\Bc\Pbb u\bigr]\Pbb^\perp\Pbb^\perp D^\alpha u\rangle
 +
\bigl\langle \Pbb^\perp D^\alpha u\bigr|\Pbb^\perp \bigl[D_u B^0\cdot (B^0)^{-1}\Bc\Pbb u\bigr]\Pbb \Pbb D^\alpha u\rangle \\
& \hspace{2.5cm} + \bigl\langle \Pbb D^\alpha u\bigr|\Pbb \bigl[D_u B^0\cdot (B^0)^{-1}\Bc\Pbb u\bigr]\Pbb^\perp \Pbb^\perp D^\alpha u\rangle
  + \bigl\langle \Pbb D^\alpha u\bigr|\Pbb\bigl[D_u B^0\cdot (B^0)^{-1}\Bc\Pbb u\bigr]\Pbb \Pbb D^\alpha u\rangle,
}
the bounds \eqref{Bodot.1}-\eqref{Bodot.4}, and the calculus inequalities that
\alin{symthm10b}{
\bigl|\bigl\langle D^\alpha u\bigr|\bigl[&D_u B^0\cdot (B^0)^{-1}\Bc\Pbb u\bigr]D^\alpha u\rangle\bigr| \leq 4|t|\theta \norm{D^\alpha u}_{L^2}^2 \\
&+C(K_1,\omega)\bigl(\beta_1\norm{\Pbb u}_{L^2}^2 + (\beta_2+\beta_3)\norm{\Pbb u}_{L^2}\norm{\Pbb D^\alpha u}_{L^2}\bigr) + 2\beta_4 \norm{\Pbb D^\alpha u}_{L^2}^2. 
}
Applying Young's inequality to the this expression, we see for any $\epsilon >0$ that
\leqn{symthm10c}{
\bigl|\bigl\langle D^\alpha u\bigr|\bigl[D_u B^0\cdot (B^0)^{-1}\Bc\Pbb u\bigr]D^\alpha u\rangle\bigr| \leq 4|t|\theta \norm{u}_{H^k}^2
+ C(K_1,\theta,\beta,\omega)\bigl(\epsilon \norm{\Pbb u}_{H^k}^2 + c(\epsilon)\norm{\Pbb u}_{L^2}^2\bigr) + 2\beta \norm{\Pbb D^\alpha u}_{L^2}^2.
}
Setting
\eqn{nudef}{
\nu = \norm{G}_{L^\infty([T_0,0),H^k)}
}
and
applying the standard energy estimates, i.e. as in the derivation of \eqref{symthm4}, for symmetric hyperbolic systems to \eqref{symthm3.1}, we obtain, after summing over $|\alpha|\leq k$ and using \eqref{symthm10}
 and \eqref{symthm10c} together with the
calculus inequalities,
the estimate\footnote{Here, we are taking essentially the same approach to deriving energy
estimates for symmetric hyperbolic systems as in \cite[Ch.16 \S 1]{TaylorIII:1996}.}
\leqn{symthm11}{
 \frac{1}{2}\del{t} \nnorm{u}_k^2 \leq \frac{\kappat-\beta\gammat_1}{t} \nnorm{\Pbb u}_k^2 +
C(K_1,K_2,K_3,\theta,\gamma_1,\beta,\omega,\nu)\biggl[ -\frac{1}{t}\Bigl(\epsilon \nnorm{\Pbb u}_{k}^2 + c(\epsilon)\nnorm{\Pbb u}_{0}^2 \Bigr) + \nnorm{u}_{k}^2 + \nu\biggr]
}
for $T_0 \leq t < T_*$, where
\eqn{symthm11a}{
K_3 = \norm{\del{t}v}_{L^\infty([T_0,0),H^{k-1})}.
}

Since $\kappat-\beta/\gammat_1 > 0$ by assumption, we see, after choosing $\epsilon$ small enough and adding a suitable multiple of \eqref{symthm8c} to \eqref{symthm11}, that
\eqn{symthm13a}{
 \del{t}\biggl( \nnorm{u}_k^2 + c \nnorm{u}^2_0 +\nu^2 - \int_{T_0}^t \frac{1}{\tau} \nnorm{\Pbb u(\tau)}_k^2 \, d\tau \biggr) \leq C(K_1,K_2,K_3,\theta,\gamma_1,\beta,\omega,\nu)\bigl(\nnorm{u}_{k}^2+\nu^2)
}
where $c=c(K_1,K_2,K_3,\theta,\gamma_1,\beta,\kappat,\omega,\nu)$.
Applying Gronwall's inequality yields
\eqn{symthm13b}{
\norm{u(t)}_{H^k}^2+\nu^2 - \int_{T_0}^t \frac{1}{\tau} \norm{\Pbb u(\tau)}_{H^k}^2 \,   \leq Ce^{C(t-T_0)}\bigl(\norm{u(T_0)}_{H^k}^2+\nu^2), \qquad T_0 \leq t < T_*,
}
for some  constant $C(K_1,K_2,K_3,\theta,\gamma_1,\gamma_2,\beta,\kappat,\omega,\nu)$. From this, Sobolev's inequality and the continuation principle, it
follows, for any fixed $R>0$, that there exists
a
\eqn{symthm13ba}{
\delta=\delta\bigl(R,K_2,K_3,\theta,\gamma_1,\gamma_2,\beta,\kappat,\omega\bigr) > 0
}
such that the solution $u=u(t,x)$ exists on the time interval $[T_0,0)$ and satisfies
\leqn{symthm13c}{
\norm{u}_{L^\infty([T_0,0),H^k)}+\nu + \biggl(- \int_{T_0}^0 \frac{1}{\tau} \norm{\Pbb u(\tau)}_{H^k}^2\, d\tau\biggr)^{\frac{1}{2}} \leq R
}
provided that $\norm{u(T_0)}_{H^k}+\nu < \delta$.

To complete the proof, we assume that the condition $\norm{u(T_0)}_{H^k}+\nu < \delta$ holds.
Writing \eqref{symivp.1} as
\eqn{symthm14a}{
\del{t}u = (B^0)^{-1}\biggl(B^I\del{I}u
+ \frac{1}{t}\Bc \Pbb u + H+ G \biggr),
}
and multiplying on the left by $\Pbb^\perp$, we obtain
\leqn{symthm14}{
\del{t}\Pbb^\perp u = \Pbb^\perp(B^0)^{-1} \Bigl(B^I\del{I}u
 + H+ G \Bigr) + \frac{1}{t}\Pbb^\perp (B^0)^{-1}\Pbb \Bc \Pbb u.
}
Integrating this in time, we see after applying the $H^{k-1}$ norm that
\lalin{symthm15}{
\norm{\Pbb^\perp u(t_2) - \Pbb^\perp u(t_1)}_{H^{k-1}} \leq  \int_{t_1}^{t_2}& \norm{\Pbb^\perp(B^0(\tau))^{-1}B^I(\tau)\del{I}u(\tau)}_{H^{k-1}} \notag \\
+ \norm{\Pbb^\perp(B^0(\tau))^{-1}H(\tau)}_{H^{k-1}}& +  \norm{\Pbb^\perp(B^0(\tau))^{-1}G(\tau)}_{H^{k-1}}
- \frac{1}{\tau}\norm{\Pbb^\perp (B^0(\tau))^{-1}\Pbb\Bc \Pbb u(\tau)}_{H^{k-1}} \, d\tau \label{symthm15.1}
}
for any $t_1,t_2$ satisfying $T_0 \leq t_1 < t_2$. Using the calculus inequalities, i.e. product, Sobolev, H\"{o}lder,
and Moser, in conjunction with \eqref{Bocross1} and the energy estimates \eqref{symthm13c}, we obtain from the
above inequality that
\eqn{symthm16}{
\norm{\Pbb^\perp u(t_2) - \Pbb^\perp u(t_1)}_{H^{k-1}} \leq C(R,K_2)\biggl(  |t_2-t_1|
- \int_{t_t}^{t_2}\frac{1}{\tau} \norm{\Pbb u(\tau)}_{H^{k-1}}^2 \, d\tau\biggr) \leq C(R,K_2)\bigl( |t_2-t_1| + \text{o}\bigl(|t_2-t_1|\bigr)\bigr).
}
It follows directly from this inequality that $\lim_{t\nearrow 0}\Pbb^\perp u(t)$ exists in $H^{k-1}(\Tbb^n)$, and moreover, that
\eqn{symthm16a}{
\Pbb^\perp u \in C^0([T_0,0],H^{k-1}).
}

Next, we apply the $H^{k-1}$ norm to \eqref{symthm14} and then integrate in time. With the help of the calculus
inequalities, \eqref{Bocross1}, and \eqref{symthm13c}, this yields the estimate
\leqn{symthm17}{
\int_{T_0}^t\norm{\del{t}\Pbb^\perp u(\tau)}_{H^{k-1}}\, d\tau \leq C(R,K_2) \qquad T_0\leq t < 0.
}
Multiplying \eqref{symivp.1} on the left by $\Pbb$ shows that
$\Pbb u$
satisfies
\leqn{symthm19}{
\Bb^i\del{i}\Pbb u = \frac{1}{t}\bar{\Bc}\Pbb u + \Hb + \Gb
}
where
\gath{symthm20}{
\Bb^i = \Pbb B^i \Pbb, \quad \bar{\Bc}  = \Pbb \Bc \Pbb, \quad
\Gb  = \Pbb G
\intertext{and}
\Hb = \Pbb H + \Pbb B^i\Pbb^\perp \del{i}\Pbb^\perp u.
}

Applying the same arguments used to derive the energy estimate \eqref{symthm8} to the equation \eqref{symthm19} yields
\leqn{symthm21}{
 \frac{1}{2}\del{t} \nnorm{\Pbb u}_0^2 \leq \frac{\kappat-\beta_4\gammat_1}{t}\nnorm{\Pbb u}_0^2  + \frac{\gammat_1}{2}\bigl(\theta+\norm{\widetilde{\del{t}\Bb^0}}_{L^\infty}+\norm{\Div \Bb}_{L^\infty}\bigr) \nnorm{\Pbb u}_0^2 + \sqrt{\gammat_1}\big(\norm{\Hb}_{L^2}+\norm{\Gb}_{L^2}\bigr) \nnorm{\Pbb u}_0
}
for $T_0 \leq t < 0$, where now the energy norm $\nnorm{\cdot}$ is defined by
\eqn{symthm22}{
\nnorm{u}_s^2 = \sum_{|\alpha|\leq s} \ip{D^\alpha u}{\Bb^0 D^\alpha u},
}
and
\alin{symthm22a}{
\widetilde{\del{t}\Bb^0} = (\del{t}&\Bb^0)(t,x,v,u)+D_v \Bb^0(t,x,v,u)\cdot \del{t}v \\
& + D_u \Bb^0(t,x,v,u)\cdot\Bigl[(B^0(t,x,v,u))^{-1}\bigr(-B^I(t,x,v,u)\del{I}u+H(t,x,v,u)+G(t,x)\bigl)\Bigr].
}
Using  \eqref{Bocross} and  \eqref{symthm13c}, we see that \eqref{symthm21} implies that
\eqn{symthm23}{
\del{t} \nnorm{\Pbb u}_0 \leq \frac{\kappat-\beta_4\gammat_1}{t}\nnorm{\Pbb u}_0  + C(R,K_2,K_3,\theta,\gamma_1,\nu)\Bigl[\bigl(1+\norm{\del{t}\Pbb^\perp u}_{H^{k-1}}\bigr) \nnorm{\Pbb u}_0 + \norm{u}_{H^1}+\norm{\Gb}_{L^2}\Bigr]
}
for $T_0 \leq t < 0$.
It then follows from Gronwall's inequality\footnote{Here, we are using the following form of Gronwall's inequality: if $x(t)$ satisfies
$x'(t) \leq a(t)x(t) +h(t)$, $t\geq T_0$, then $x(t)\leq x(T_0)e^{A(t)}+ \int_{T_0}^t e^{A(t)-A(\tau)}h(\tau) \, d\tau$ where
$A(t)= \int_{T_0}^t a(\tau)\, d\tau$. In particular, we observe from this that if $a(t)=\frac{\lambda}{t}+b(t)$, $\lambda \in \Rbb$, where
$|b(t)|\leq r$ then
\eqn{gronwall}{
x(t) \leq e^{r(t-T_0)}x(T_0)\left(\frac{t}{T_0}\right)^\lambda +   e^{r(t-2T_0)}(-t)^\lambda \int_{T_0}^t \frac{|h(\tau)|}{(-\tau)^\lambda}\, d\tau
}
for $T_0 \leq t < 0$.
} and the bounds \eqref{symthm13c} and \eqref{symthm17} that
\leqn{symthm24}{
\norm{\Pbb u(t)}_{L^2} \lesssim \begin{cases}
-t & \text{if $\kappat-\beta_4\gammat_1 > 1$} \\
t \ln\bigl(\frac{t}{T_0}\bigr) & \text{if $\kappat-\beta_4\gammat_1 = 1$ } \\
(-t)^{\kappat-\beta_4\gammat_1} & \text{if $\kappat-\beta_4\gammat_1 < 1$} \end{cases}, \quad T_0 \leq t < 0.
}

Proceeding as above, we can apply the same arguments used to derive the energy estimate \eqref{symthm11} to \eqref{symthm19} to obtain the estimate
\eqn{symthm25}{
 \frac{1}{2}\del{t} \nnorm{\Pbb u}_{k-1}^2 \leq \frac{\kappat-\beta_4\gammat_1}{t} \nnorm{\Pbb u}_{k-1}^2 +
C(R,K_2,K_3,\theta,\gamma_1,\nu)\biggl[ -\frac{1}{t}\Bigl(\epsilon \nnorm{\Pbb u}_{k-1}^2 + c(\epsilon)\nnorm{\Pbb u}_{0}^2 \Bigr) + \nnorm{\Pbb u}_{k-1}\biggr].
}
Choosing $\epsilon$ small enough, we see that the inequality
\leqn{symthm25a}{
\del{t} \nnorm{\Pbb u}_{k-1} \leq \frac{\kappat-\beta_4\gammat_1-\sigma}{t} \nnorm{\Pbb u}_{k-1} +
C(R,K_2,K_3,\theta,\gamma_1,\gamma_2,\nu,\beta_4,\kappat)\biggl(\frac{1}{t}\norm{\Pbb u}_{L^2} + 1\biggr)
}
holds for any fixed $\sigma >0$. Choosing
\begin{equation*}
\sigma \in \begin{cases} (0,\kappat-\beta_4\gammat_1-1) & \text{if $\kappat-\beta_4\gammat_1>1$} \\
(0,\kappat-\beta_4\gammat_1) &  \text{if $\kappat-\beta_4\gammat_1\leq 1$}
\end{cases}, 
\end{equation*}
it follows from \eqref{symthm24} and an application of Gronwall's inequality to \eqref{symthm25a}, that $\Pbb u$ satisfies the decay estimate
\eqn{symthm26}{
\norm{\Pbb u(t)}_{H^{k-1}} \lesssim \begin{cases}
-t & \text{if $\kappat -\beta_4\gammat_1> 1$} \\
(-t)^{\kappat-\beta_4\gammat_1-\sigma} & \text{if $\kappat-\beta_4\gammat_1 \leq 1$} \end{cases}, \quad T_0 \leq t < 0.
}
Using the above estimate in conjunction with \eqref{symthm15.1}, we see with the help of the calculus inequalities and the
bound \eqref{symthm13c} that
\lalin{symthm27}{
\norm{\Pbb^\perp u(t_2) - \Pbb^\perp u(t_1)}_{H^{k-1}}& \lesssim
\begin{cases}  |t_2-t_1|
& \text{if $\kappat-\beta_4\gammat_1 > 1$ } \\
  |t_2-t_1| +  (-t_1)^{2(\kappat-\beta_4\gammat_1-\sigma)}-(-t_2)^{2(\kappat-\beta_4\gammat_1-\sigma)}  &  \text{if $\kappat-\beta_4\gammat_1 \leq 1$ }
 \end{cases} \label{symthm27.1}
}
for any $t_1,t_2$ satisfying $T_0\leq t_1<t_2 < 0$. We also note in \eqref{symthm15.1} that if
$[B^0,\Pbb]=0$, then the term with the $\frac{1}{\tau}$ vanishes and we obtain the estimate
\leqn{symthm28}{
\norm{\Pbb^\perp u(t_2) - \Pbb^\perp u(t_1)}_{H^{k-1}} \leq C(R,K_2)|t_2-t_1|, \quad T_0\leq t_1<t_2<0.
}
Sending $t_2\nearrow 0$ in \eqref{symthm27.1} and \eqref{symthm28}, we see that $\Pbb^\perp u$ satisfies
\alin{symthm29}{
\norm{\Pbb^\perp u(t) - \Pbb^\perp u(0)}_{H^{k-1}} \lesssim
& \begin{cases}  -t
& \text{if $\kappat-\beta_4\gamma_1 > 1$ or $[B^0,\Pbb]=0$} \\
 -t +  (-t)^{2(\kappat-\beta_4\gammat_1-\sigma) }  &  \text{if $\kappat-\beta_4\gammat_1 \leq 1$ }
 \end{cases}
}
for $T_0 \leq t < 0$.

\end{proof}

\begin{rem} \label{boundrem}
$\,$

\begin{itemize}
\item[(i)]
The bounds \eqref{A0bnd}-\eqref{Bodot.4}  and \eqref{A1bnd} do not need to hold for all  $(v,u)\in \Rbb^M \times \Rbb^N$. The conclusions of
Theorem \ref{symthm} remain valid provided that
\begin{itemize}
\item[(a)] there exists
constants $\rc,\Rc>0$ such that \eqref{A0bnd}-\eqref{Bodot.4}  and \eqref{A1bnd} hold for all $(t,x,v,u)\in [T_0,0]\times \Tbb^n\times B_{\rc}(\Rbb^M)\times B_{\Rc}(\Rbb^N)$,
\item[(b)]  the map $v\in C^0([T_0,0],H^k(\Tbb^n))$ satisfies $\norm{v}_{L^\infty([T_0,0]\times \Rbb^M)} < \rc$,
\item[(c)] and the continuation principle is modified to having the solution $u(t,x)$ remain in a proper subset of $B_{\Rc}(\Rbb^N)$ for $(t,x)\in[T_0,T_*)\times \Tbb^n$ in addition
to satisfying $\norm{u}_{L^\infty([T_0,T_*),W^{1,\infty})}<\infty$,
\item[(d)]  and the constant $R>0$ is chosen so that $R < \Rc$.
\end{itemize}
\item[(ii)] Given any two constants $\betat_1,\betat_2 >0$, we observe that we can satisfy
\eqn{rembound1}{
C_1 \leq \frac{2\beta_1}{\omega + |\Pbb^\perp u|^2} \AND C_2 \leq \frac{2\beta_2}{\sqrt{\omega + |\Pbb^\perp u|^2}}
}
for all $u\in B_\Rc(\Rbb^N)$
by setting
\eqn{rembound2}{
\beta_1 = \frac{\betat_1(\Rc+\Rc^2)}{2}, \quad \beta_1 = \frac{\betat_2\sqrt{(\Rc+\Rc^2)}}{2}\AND \omega = \Rc.
}
This shows that if the bounds
\lalin{Bodotrem}{
\bigl|\Pbb^\perp \bigl[D_u B^0(t,x,v,u)\cdot (B^0(t,x,v,u))^{-1}\Bc(t,x,v,u)\Pbb u\bigr]\Pbb^\perp\bigr|_{\text{op}} &\leq |t|\theta +  \betat_1 |\Pbb u|^2, \label{Bodotrem.1} \\
\bigl|\Pbb^\perp \bigl[D_u B^0(t,x,v,u)\cdot (B^0(t,x,v,u))^{-1}\Bc(t,x,v,u)\Pbb u\bigr]\Pbb\bigr|_{\text{op}} &\leq |t|\theta +  \betat_2 |\Pbb u|, \label{Bodotrem.2} \\
\intertext{and}
\bigl|\Pbb \bigl[D_u B^0(t,x,v,u)\cdot (B^0(t,x,v,u))^{-1}\Bc(t,x,v,u)\Pbb u\bigr]\Pbb^\perp\bigr|_{\text{op}} &\leq |t|\theta +  \betat_3 |\Pbb u|, \label{Bodotrem.3}
}
hold for all $(t,x,v,u)\in [T_0,0]\times \Tbb^n\times B_{\rc}(\Rbb^M)\times B_{\Rc}(\Rbb^N)$, then the bounds \eqref{Bodot.1}-\eqref{Bodot.2} will be satisfied for all
$(t,x,v,u)\in [T_0,0]\times \Tbb^n\times B_{\rc}(\Rbb^M)\times B_{\Rc}(\Rbb^N)$ for the constants
$\beta_1,\beta_2,\beta_3$ and $\omega$ given by
\eqn{rembound3}{
\beta_1 = \frac{\betat_1(\Rc+\Rc^2)}{2}, \quad \beta_3=\beta_2 = \frac{\betat_2\sqrt{(\Rc+\Rc^2)}}{2}\AND \omega = \Rc.
}
In particular, this shows that by choosing $\Rc$ small enough, we can always arrange that $\beta_1$, $\beta_2$ and $\beta_3$ are as small as we like.

Since
the left-hand side of \eqref{Bodot.2}-\eqref{Bodot.4} are linear in $\Pbb u$, it is clear that there always exists constants $\betat_2,\betat_3,\beta_4>0$ such that the bounds \eqref{Bodotrem.2}-\eqref{Bodotrem.3} and \eqref{Bodot.4}
hold for all $(t,x,v,u)\in [T_0,0]\times \Tbb^n\times B_{\rc}(\Rbb^M)\times B_{\Rc}(\Rbb^N)$. Moreover, by choosing $\Rc>0$ small enough, we can take $\beta_4>0$ as small as we like.

\bigskip

\noindent Thus we can conclude the following:

\medskip
\noindent \textit{If there exists a constant $\betat_1>0$ such that
\eqn{rembound4}{
\bigl|\Pbb^\perp \bigl[D_u B^0(t,x,v,u)\cdot (B^0(t,x,v,u))^{-1}\Bc(t,x,v,u)\Pbb u\bigr]\Pbb^\perp\bigr|_{\text{op}} \leq |t|\theta +  \betat_1 |\Pbb u|^2
}
for all $(t,x,v,u)\in [T_0,0]\times \Tbb^n\times B_{\rc}(\Rbb^M)\times B_{\Rc}(\Rbb^N)$ and $\Rc \in (0,\Rc_0]$, then there exists constants $\beta_1,\beta_2,\beta_3,\beta_4>0$
such that the bounds \eqref{Bodot.1}-\eqref{Bodot.4} hold for all  $(t,x,v,u)\in [T_0,0]\times \Tbb^n\times B_{\rc}(\Rbb^M)\times B_{\Rc}(\Rbb^N)$. Moreover, by choosing $\Rc >0$
small enough, the constants $\beta_1,\beta_2,\beta_3,\beta_4$ can be taken arbitrarily small. }
\end{itemize}
\end{rem}


\bibliographystyle{amsplain}
\bibliography{refs}

\end{document}